\documentclass[12pt,cls,onecolumn]{IEEEtran}
\usepackage{graphicx,subfigure, amsmath, amssymb}
\begin{document}

\title{Improved Capacity Scaling in Wireless Networks With Infrastructure}

\author{\large Won-Yong Shin, \emph{Member}, \emph{IEEE},
Sang-Woon Jeon, \emph{Student Member}, \emph{IEEE},\\ Natasha
Devroye, Mai H. Vu, \emph{Member}, \emph{IEEE}, Sae-Young Chung,
\emph{Senior Member}, \emph{IEEE}, Yong H. Lee,
\emph{Senior Member}, \emph{IEEE}, and Vahid Tarokh, \emph{Fellow}, \emph{IEEE}\\
\thanks{The work of W.-Y. Shin and Y. H. Lee was supported by the Brain Korea 21
Project, The School of Information Technology, KAIST in 2008. The
work of S.-W. Jeon and S.-Y. Chung was supported by the MKE under
the ITRC support program supervised by the IITA
(IITA-2008-C1090-0803-0002). The work of M. H. Vu was supported in
part by ARO MURI grant number W911NF-07-1-0376. The material in this
paper was presented in part at the IEEE Communication Theory
Workshop, St. Croix, US Virgin Islands, May 2008 and the IEEE
International Symposium on Information Theory, Toronto, Canada, July
2008.}
\thanks{W.-Y. Shin was with the Department of EE, KAIST, Daejeon 305-701, Republic of Korea. He is now with the School of Engineering and Applied Sciences,
Harvard University, Cambridge, MA 02138 USA (E-mail:
wyshin@seas.harvard.edu).}
\thanks{S.-W. Jeon, S.-Y. Chung, and Y. H. Lee are with the Department of EE, KAIST, Daejeon 305-701, Republic of Korea (E-mail: swjeon@kaist.ac.kr; sychung@ee.kaist.ac.kr;
yohlee@ee.kaist.ac.kr).}
\thanks{N. Devroye was with the School of Engineering and Applied Sciences, Harvard
University, Cambridge, MA 02138 USA. She is now with the Department
of Electrical and Computer Engineering, University of Illinois at
Chicago, Chicago, Illinois 60607 USA (E-mail: devroye@ece.uic.edu).}
\thanks{M. H. Vu was with the School of Engineering and Applied Sciences, Harvard
University, Cambridge, MA 02138 USA. She is now with the Department
of Electrical and Computer Engineering, McGill University, Montreal,
QC H3A 2A7, Canada (E-mail: mai.h.vu@mcgill.ca).}
\thanks{V. Tarokh
is with the School of Engineering and Applied Sciences, Harvard
University, Cambridge, MA 02138 USA (E-mail:
vahid@seas.harvard.edu).} } \maketitle

\markboth{Under Revision for IEEE Transactions on Information
Theory} {Shin {\em et al.}: Improved Capacity Scaling in Wireless
Networks With Infrastructure}


\newtheorem{definition}{Definition}
\newtheorem{theorem}{Theorem}
\newtheorem{lemma}{Lemma}
\newtheorem{example}{Example}
\newtheorem{corollary}{Corollary}
\newtheorem{proposition}{Proposition}
\newtheorem{conjecture}{Conjecture}
\newtheorem{remark}{Remark}

\def \diag{\operatornamewithlimits{diag}}
\def \min{\operatornamewithlimits{min}}
\def \max{\operatornamewithlimits{max}}
\def \log{\operatorname{log}}
\def \max{\operatorname{max}}
\def \rank{\operatorname{rank}}
\def \out{\operatorname{out}}
\def \exp{\operatorname{exp}}
\def \arg{\operatorname{arg}}
\def \E{\operatorname{E}}
\def \tr{\operatorname{tr}}
\def \SNR{\operatorname{SNR}}
\def \dB{\operatorname{dB}}
\def \ln{\operatorname{ln}}

\def \be {\begin{eqnarray}}
\def \ee {\end{eqnarray}}
\def \ben {\begin{eqnarray*}}
\def \een {\end{eqnarray*}}

\newpage

\begin{abstract}

This paper analyzes the impact and benefits of infrastructure
support in improving the throughput scaling in networks of $n$
randomly located wireless nodes. The infrastructure uses
multi-antenna base stations (BSs), in which the number of BSs and
the number of antennas at each BS can scale at arbitrary rates
relative to $n$. Under the model, capacity scaling laws are analyzed
for both dense and extended networks. Two BS-based routing schemes
are first introduced in this study: an infrastructure-supported
single-hop (ISH) routing protocol with multiple-access uplink and
broadcast downlink and an infrastructure-supported multi-hop (IMH)
routing protocol. Then, their achievable throughput scalings are
analyzed. These schemes are compared against two conventional
schemes without BSs: the multi-hop (MH) transmission and
hierarchical cooperation (HC) schemes. It is shown that a linear
throughput scaling is achieved in dense networks, as in the case
without help of BSs. In contrast, the proposed BS-based routing
schemes can, under realistic network conditions, improve the
throughput scaling significantly in extended networks. The gain
comes from the following advantages of these BS-based protocols.
First, more nodes can transmit simultaneously in the proposed scheme
than in the MH scheme if the number of BSs and the number of
antennas are large enough. Second, by improving the long-distance
signal-to-noise ratio (SNR), the received signal power can be larger
than that of the HC, enabling a better throughput scaling under
extended networks. Furthermore, by deriving the corresponding
information-theoretic cut-set upper bounds, it is shown under
extended networks that a combination of four schemes IMH, ISH, MH,
and HC is order-optimal in all operating regimes.

\end{abstract}

\begin{keywords}
Base station (BS), infrastructure, cut-set upper bound, hierarchical
cooperation (HC), multi-antenna, multi-hop (MH), single-hop,
throughput scaling
\end{keywords}


\newpage

\section{Introduction}

In~\cite{GuptaKumar:00}, Gupta and Kumar introduced and studied the
throughput scaling in a large wireless ad hoc network. They showed
that, for a network of $n$ source--destination (S--D) pairs randomly
distributed in a unit area, the total throughput scales as
$\Theta(\sqrt{n/\log n})$.\footnote{We use the following notations:
i) $f(x)=O(g(x))$ means that there exist constants $C$ and $c$ such
that $f(x)\le Cg(x)$ for all $x>c$. ii) $f(x)=o(g(x))$ means that
$\underset{x\rightarrow\infty}\lim\frac{f(x)}{g(x)}=0$. iii)
$f(x)=\Omega(g(x))$ if $g(x)=O(f(x))$. iv) $f(x)=\omega(g(x))$ if
$g(x)=o(f(x))$. v) $f(x)=\Theta(g(x))$ if $f(x)=O(g(x))$ and
$g(x)=O(f(x))$~\cite{Knuth:76}.} This throughput scaling is achieved
using a multi-hop (MH) communication scheme. Recent results have
shown that an almost linear throughput in the network, i.e.,
$\Theta(n^{1-\epsilon})$ for an arbitrarily small $\epsilon>0$, is
achievable by using a hierarchical cooperation (HC)
strategy~\cite{OzgurLevequeTse:07,NiesenGuptaShah:09,Xie:08,NiesenGuptaShah:08}.
Besides the schemes
in~\cite{OzgurLevequeTse:07,NiesenGuptaShah:09,Xie:08,NiesenGuptaShah:08},
there has been a steady push to improve the throughput of wireless
networks up to a linear scaling in a variety of network scenarios by
using novel techniques such as networks with node
mobility~\cite{GrossglauserTse:02}, interference alignment
schemes~\cite{CadambeJafar:07}, and infrastructure
support~\cite{ZemlianovVeciana:05}.

Although it would be good to have such a linear scaling with only
wireless connectivity, in practice there will be a price to pay in
terms of higher delay and higher cost of channel estimation. For
these reasons, it would still be good to have infrastructure aiding
wireless nodes. Such hybrid networks consisting of both wireless ad
hoc nodes and infrastructure nodes, or equivalently base stations
(BSs), have been introduced and analyzed
in~\cite{KulkarniViswanath:03,KozatTassiulas:03,ZemlianovVeciana:05,LiuLiuTowsley:03,LiuThiranTowsley:07}.
BSs are assumed to be interconnected by high capacity wired links.

While it has been shown that BSs can be beneficial in wireless
networks, the impact and benefits of infrastructure support are not
fully understood yet. This paper features analysis of the capacity
scaling laws for a more general hybrid network where there are $l$
antennas at each BS, allowing the exploitation of the spatial
dimension at each BS.\footnote{When the carrier frequency is very
high, it is possible to deploy many antennas at each BS since the
wavelength becomes very small.} By allowing the number $m$ of BSs
and the number $l$ of antennas to scale at arbitrary rates relative
to the number $n$ of wireless nodes, achievable rate scalings and
information-theoretic upper bounds are derived as a function of
these scaling parameters. To show our achievability results, two new
routing protocols utilizing BSs are proposed here. In the first
protocol, multiple sources (nodes) transmit simultaneously to each
BS using a direct single-hop multiple-access in the uplink and a
direct single-hop broadcast from each BS in the downlink. In the
second protocol, the high-speed BS links are combined with
nearest-neighbor routing via MH among the wireless nodes. The
obtained results are also compared to two conventional schemes
without using BSs: the MH protocol~\cite{GuptaKumar:00} and HC
protocol~\cite{OzgurLevequeTse:07}.

The proposed schemes are evaluated in two different networks: dense
networks~\cite{GuptaKumar:00,ElGamalMammenPrabhakarShah:06,OzgurLevequeTse:07}
of unit area, and extended
networks~\cite{XieKumar:04,JovicicViswanathKulkarni:04,XueXieKumar:05,FranceschettiDouseTseThiran:07,OzgurLevequeTse:07}
of unit node density. In dense networks, it is shown that an almost
linear throughput scaling is achieved as
in~\cite{OzgurLevequeTse:07}, which is rather obvious. On the
contrary, in extended networks, depending on the network
configurations and the path-loss attenuation, the proposed BS-based
protocols can improve the throughput scaling significantly, compared
to the case without help of BSs. Part of the improvement comes from
the following two advantages over the conventional schemes: having
more antennas enables more transmit pairs that can be activated
simultaneously (compared to those of the MH scheme), i.e., enough
degree-of-freedom (DoF) gain is obtained, provided the number $m$ of
BSs and the number $l$ of antennas per BS are large enough. In
addition, the BSs help to improve the long-distance signal-to-noise
ratio (SNR)\footnote{In~\cite{OzgurJohariTseLeveque:10}, the
long-distance SNR is defined as $n$ times the received SNR between
two farthest nodes across the largest scale in wireless networks. In
our BS-based network, it can be interpreted as the total SNR
transferred to any given node (or BS antenna) over a certain smaller
scale reduced by infrastructure support.}, first termed
in~\cite{OzgurJohariTseLeveque:10}, which leads to a larger received
signal power than that of the HC scheme, i.e., enough power gain is
obtained, thus allowing for a better throughput scaling in extended
networks.

To show the optimality of our proposed schemes, cut-set upper bounds
on the throughput scaling are derived for a network with
infrastructure. Note that the previous upper bounds
in~\cite{XieKumar:04,JovicicViswanathKulkarni:04,LevequeTelatar:05,XueXieKumar:05,OzgurLevequePreissmann:07,OzgurLevequeTse:07}
assume pure ad hoc networks and those for BS-based networks are not
rigorously characterized in both dense and extended networks. In
dense networks, it is shown that the obtained upper bound is the
same as that of~\cite{OzgurLevequeTse:07} assuming no BSs. Hence, it
is seen that the BSs cannot improve the throughput scaling and the
HC scheme is order-optimal for all the operating regimes. In
extended networks, the proposed approach is based in part on the
characteristics at power-limited regimes shown
in~\cite{OzgurLevequeTse:07,OzgurJohariTseLeveque:10}. It is shown
that our upper bounds match the achievable throughput scalings for
all the operating regimes within a factor of $n^{\epsilon}$ with an
arbitrarily small exponent $\epsilon>0$. To achieve the
order-optimal scaling, using one of the two BS-based routings,
conventional MH transmission, and HC strategy is needed depending on
the operating regimes.

The rest of this paper is organized as follows. Section
\ref{SEC:system} describes the proposed network model with
infrastructure support. The main results are briefly shown in
Section~\ref{SEC:result}. The two proposed BS-based protocols are
characterized in Section \ref{SEC:routing} and their achievable
throughput scalings are analyzed in Section \ref{SEC:scaling}. The
corresponding information-theoretic cut-set upper bounds are derived
in Section \ref{SEC:upper}. Finally, Section \ref{SEC:conclusion}
summarizes this paper with some concluding remarks.

Throughout this paper the superscripts $T$ and $\dagger$ denote the
transpose and conjugate transpose, respectively, of a matrix (or a
vector). ${\bf I}_n$ is the identity matrix of size $n\times n$,
$[\cdot]_{ki}$ is the $(k, i)$-th element of a matrix, $\tr(\cdot)$
is the trace, and $\det(\cdot)$ is the determinant. $\mathbb{C}$ is
the field of complex numbers and $E[\cdot]$ is the expectation.
Unless otherwise stated, all logarithms are assumed to be to the
base 2.


\section{System and Channel Models} \label{SEC:system}

Consider a two-dimensional wireless network that consists of $n$
S--D pairs uniformly and independently distributed on a square
except for the area covered by BSs. Then, no nodes are physically
located inside the BSs. The network is assumed to have an area of
one and $n$ in dense and extended networks, respectively. Suppose
that the whole area is divided into $m$ square cells, each of which
is covered by one BS with $l$ antennas at its center (see Fig.
\ref{FIG:infra}). It is assumed that the total number of antennas in
the network scales at most linearly with $n$, i.e., $ml=O(n)$. For
analytical convenience, we would like to state that parameters $n$,
$m$, and $l$ are then related according to
\begin{equation}
  n=m^{1/\beta}=l^{1/\gamma}, \nonumber
\end{equation}
where $\beta, \gamma \in [0,1)$ satisfying $\beta+\gamma\le1$. The
number of antennas is allowed to grow with the number of nodes and
BSs in the network. The placement of these $l$ antennas depends on
how the number of antennas scales as follows:
\begin{enumerate}
  \item $l$ antennas are regularly placed on the BS boundary if
  $l=O(\sqrt{n/m})$, and
  \item $\sqrt{n/m}$ antennas are regularly placed on the BS boundary and the rest are uniformly placed inside the
    boundary if $l=\omega(\sqrt{n/m})$ and
    $l=O(n/m)$.\footnote{Such an antenna deployment strategy
guarantees both the nearest neighbor transmission from/to each BS
antenna and enough space among the antennas, and thus enables our
BS-based routing protocol via MH to work well, which will be
discussed in Section~\ref{SEC:withinfra}.}
\end{enumerate}
Furthermore, we assume that the BSs are connected to each other with
sufficiently large capacity such that the communication between the
BSs is not the limiting factor to overall throughput scaling. The
required transmission rate of each wired BS-to-BS link will be
specified later (in Remark~\ref{REM:CBS}). It is also assumed that
these BSs are neither sources nor destinations. Suppose that the
radius of each BS scales as $\epsilon_0/\sqrt{m}$ for dense networks
and as $\epsilon_0\sqrt{n/m}$ for extended networks, where
$\epsilon_0>0$ is an arbitrarily small constant independent of $n$,
which means that it is independent of $m$ and $l$ as well. This
radius scaling would ensure enough separation among the antennas
since the per-antenna distance scales at least as the average
per-node distance for any parameters $n$, $m$, and $l$.

Let us first describe the uplink channel model. Let
$I\subseteq\{1,\cdots,n\}$ denote the set of simultaneously
transmitting wireless nodes. Then, the $l\times1$ received signal
vector ${\bf y}_{s}$ of BS $s\in\{1,\cdots,m\}$ and the $l\times1$
uplink complex channel vector ${\bf h}_{si}^u$ between node
$i\in\{1,\cdots,n\}$ and BS $s$ are given by
\begin{equation}
{\bf y}_{s}=\underset{i\in I}\sum {\bf h}_{si}^{u}x_{i}+{\bf n}_{s}
\nonumber
\end{equation}
and
\begin{equation}
{\bf
h}_{si}^u=\left[\frac{e^{j\theta_{si,1}^u}}{r_{si,1}^{u~\alpha/2}} \
    \frac{e^{j\theta_{si,2}^u}}{r_{si,2}^{u~\alpha/2}} \cdots \
    \frac{e^{j\theta_{si,l}^u}}{r_{si,l}^{u~\alpha/2}}\right]^T,
    \label{EQ:signalU}
\end{equation}
respectively, where $x_{i}$ is the transmit signal of node $i$, and
${\bf n}_s$ denotes the circularly symmetric complex additive white
Gaussian noise (AWGN) vector whose element has zero-mean and
variance $N_0$. Here, ${\theta_{si,t}^u}$ represents the random
phases uniformly distributed over $[0,2\pi)$ and independent for
different $i$, $s$, $t$, and time (transmission symbol), i.e., fast
fading. Note that this random phase model is based on a far-field
assumption, which is valid if the wavelength is sufficiently small.
${r_{si,t}^{u}}$ and $\alpha>2$ denote the distance between node $i$
and the $t$-th antenna of BS $s$, and the path-loss exponent,
respectively. Similarly, the $1\times l$ downlink complex channel
vector ${\bf h}_{is}^d$ between BS $s$ and node $i$
($s\in\{1,\cdots,m\}$ and $i\in\{1,\cdots,n\}$) and the complex
channel $h_{ki}$ between nodes $i$ and $k$ ($i,k\in\{1,\cdots,n\}$)
are given by
\begin{equation}
{\bf
h}_{is}^d=\left[\frac{e^{j\theta_{is,1}^d}}{r_{is,1}^{d~\alpha/2}} \
    \frac{e^{j\theta_{is,2}^d}}{r_{is,2}^{d~\alpha/2}} \cdots \
    \frac{e^{j\theta_{is,l}^d}}{r_{is,l}^{d~\alpha/2}}\right] \label{EQ:signalD} \nonumber
\end{equation}
and
\begin{equation}
h_{ki}=\frac{e^{j\theta_{ki}}}{r_{ki}^{\alpha/2}},
\label{EQ:signalW}
\end{equation}
respectively, where ${\theta_{is,t}^d}$ and ${\theta_{ki}}$ have
uniform distribution over $[0,2\pi)$, and are independent for
different $i$, $s$, $t$, $k$, and time. ${r_{is,t}^{d}}$ and
$r_{ki}$ denote the distance between the $t$-th antenna of BS $s$
and node $i$, and the distance between nodes $i$ and $k$,
respectively.

Suppose that each node and BS should satisfy an average transmit
power constraint $P$ and $nP/m$, respectively, during
transmission.\footnote{This assumption is reasonable since the
balance between uplink and downlink would be maintained for the case
where the transmit power of one BS in a cell increases
proportionally to the total power consumed by all the users covered
by the cell. In this case, note that although we allow additional
power for BSs, the total transmit power used by all wireless nodes
and BSs still remains as $\Theta(n)$.} Then, the total transmit
powers allowed for the $n$ wireless nodes and the $m$ BSs are the
same. Channel state information (CSI) is assumed to be available
both at the receivers and the transmitters for downlink
transmissions from BSs but only at the receivers for transmissions
from wireless nodes. Let $T_n(\alpha,\beta,\gamma)$ denote the total
throughput of the network for the parameters $\alpha$, $\beta$, and
$\gamma$, and then its scaling exponent is defined
by~\cite{OzgurLevequeTse:07,OzgurJohariTseLeveque:10}
\begin{equation}
e(\alpha,\beta,\gamma)=\underset{n\rightarrow\infty}\lim \frac{\log
T_n(\alpha,\beta,\gamma)}{\log n}, \label{EQ:exponent}
\end{equation}
which captures the dominant term in the exponent of the throughput
scaling.\footnote{To simplify notations, $T_n(\alpha,\beta,\gamma)$
will be written as $T_n$ if dropping $\alpha$, $\beta$, and $\gamma$
does not cause any confusion.} It is assumed that each node
transmits at a rate $T_n(\alpha,\beta,\gamma)/n$.


\section{Main Results} \label{SEC:result}

This section presents the formal statement of our results, which are
divided into two parts to show the capacity scaling laws: achievable
throughput scalings and information-theoretic upper bounds. We
simply state these results here and derive them in later sections.
The following summarizes our main results. In dense networks, the
optimal scaling exponent is given by $e(\alpha,\beta,\gamma)=1$,
while the optimal scaling exponent $e(\alpha,\beta,\gamma)$ in
extended networks is given by
\begin{eqnarray} \label{EQ:sumEX}
e(\alpha,\beta,\gamma)=\max\left\{1+\gamma-\frac{(1-\beta)\alpha}{2},\min\left\{\beta+\gamma,\frac{\beta+1}{2}\right\},\frac{1}{2},2-\frac{\alpha}{2}\right\},
\end{eqnarray}
where the details are shown in the following two subsections.

\subsection{Achievable Throughput Scaling}

In this subsection, the throughput scaling for both dense and
extended networks under our routing protocols is shown. The
following theorem first presents a lower bound on the total capacity
scaling $T_n$ in dense networks.

\begin{theorem}
In a dense network,
\begin{equation}
T_n=\Omega(n^{1-\epsilon}) \label{EQ:ratedense}
\end{equation}
is achievable with high probability (whp) for an arbitrarily small
$\epsilon>0$.
\end{theorem}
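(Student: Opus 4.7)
The plan is to achieve the bound by applying the hierarchical cooperation (HC) strategy of~\cite{OzgurLevequeTse:07} directly to the $n$ wireless source--destination pairs, while leaving the $m$ base stations idle. Since the BSs are neither sources nor destinations, any scheme feasible in the pure ad hoc network of $n$ nodes on a unit square is also feasible in our hybrid dense network. The HC scheme is already known to achieve an aggregate throughput of $\Omega(n^{1-\epsilon})$ whp for arbitrarily small $\epsilon>0$ and any $\alpha>2$ in the unit-area (dense) setting, which is exactly the claimed bound.

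For completeness, I would briefly recall the structure of HC: nodes are recursively partitioned into clusters, each source distributes its symbol among the nodes of its cluster via short-range exchanges, long-range distributed MIMO transmissions are then carried out between source and destination clusters, and each destination decodes by collecting the observations from its own cluster. The per-stage overhead is polylogarithmic in $n$, which is absorbed into the $n^{\epsilon}$ slack by choosing a sufficiently deep recursion.

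The only point requiring verification is that in our model the $n$ nodes are i.i.d.\ uniform on the unit square \emph{minus} the $m$ disks of radius $\epsilon_0/\sqrt{m}$ that cover the BSs, rather than on the full unit square. The total excluded area is $m\cdot\pi(\epsilon_0/\sqrt{m})^2=\pi\epsilon_0^2$, which is a constant that can be made arbitrarily small and is independent of $n$. Consequently, the conditional density remains $\Theta(n)$, and the geometric and concentration arguments underlying the HC analysis (balanced cluster occupancies, bounded number of simultaneous interferers, good conditioning of the cluster-to-cluster MIMO channels, etc.) transfer without essential modification. I expect no real obstacle beyond this routine verification, which is consistent with the authors' remark that the dense case is ``rather obvious.''
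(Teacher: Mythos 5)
Your proposal is correct and matches the paper's own argument, which likewise establishes the bound by simply invoking the HC strategy of~\cite{OzgurLevequeTse:07} on the $n$ wireless nodes while ignoring the BSs. Your additional check that the excluded BS area is only a constant $\pi\epsilon_0^2$ is a reasonable (and slightly more careful) elaboration of the same route, not a different approach.
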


Equation (\ref{EQ:ratedense}) is achievable by simply using the HC
strategy~\cite{OzgurLevequeTse:07}.\footnote{Note that the HC always
outperforms the proposed BS-based routing protocols in terms of
throughput performance under dense networks, even though the details
are not shown in this paper.} Although the HC provides an almost
linear throughput scaling in dense networks, it may degrade
throughput scalings in extended (or power-limited) networks. An
achievable throughput under extended networks is specifically given
as follows.

\begin{theorem} \label{THM:exlower}
In an extended network,
\begin{equation}
T_n=\Omega\left(\max\left\{ml\left(\frac{m}{n}\right)^{\alpha/2-1},m\min\left\{l,\left(\frac{n}{m}\right)^{1/2-\epsilon}\right\},n^{1/2-\epsilon},n^{2-\alpha/2-\epsilon}\right\}\right)
\label{EQ:exrate}
\end{equation}
is achievable whp for an arbitrarily small $\epsilon>0$.
\end{theorem}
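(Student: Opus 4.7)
The strategy is to achieve each of the four terms inside the $\max$ by a separate transmission scheme and then observe that the overall achievable throughput is at least the maximum over any valid protocol. Two of the four schemes are the pure ad-hoc ones analyzed in prior work; the other two exploit the infrastructure.

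First, the term $n^{1/2-\epsilon}$ is achieved by the classical multi-hop (MH) routing of \cite{GuptaKumar:00}: partitioning the extended network into cells of area $\Theta(\log n)$ and applying constant-reuse spatial TDMA with nearest-neighbor forwarding yields a per-node rate of $\Omega(n^{-1/2-\epsilon})$ whp, and hence a total rate of $\Omega(n^{1/2-\epsilon})$. The term $n^{2-\alpha/2-\epsilon}$ is achieved by the hierarchical cooperation (HC) scheme of \cite{OzgurLevequeTse:07}: in the power-limited regime of an extended network, HC attains per-node rate $\Omega(n^{1-\alpha/2-\epsilon})$ and hence total rate $\Omega(n^{2-\alpha/2-\epsilon})$.

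Next, $ml(m/n)^{\alpha/2-1}$ is achieved by the infrastructure-supported single-hop (ISH) protocol. In each of the $m$ cells, all $n/m$ wireless nodes transmit simultaneously to the $l$-antenna BS, which jointly decodes as an $(n/m)$-user MIMO multiple-access channel. The $m$ cells are activated according to a constant-size spatial-reuse pattern so that inter-cell interference contributes only a bounded factor (the out-of-cell interference sum is absolutely convergent since $\alpha>2$). With characteristic uplink distance $\Theta(\sqrt{n/m})$, every link has received power $\Theta((n/m)^{-\alpha/2})$; since the operating point is power-limited, the per-cell sum capacity $\log\det(I_l+(P/N_0)HH^{\dagger})$ is well approximated by $(P/N_0)\tr(HH^{\dagger})=\Omega\!\bigl(l\,(n/m)^{1-\alpha/2}\bigr)$. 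Summing over the $m$ cells gives $\Omega(ml(m/n)^{\alpha/2-1})$, and the downlink yields a matching rate via MIMO broadcast with the available CSIT.

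Finally, $m\min\{l,(n/m)^{1/2-\epsilon}\}$ is achieved by the infrastructure-supported multi-hop (IMH) protocol: inside every cell, MH routing is used toward the $l$ BS antennas, treated as $l$ co-located sinks. When $l\le(n/m)^{1/2-\epsilon}$, the antenna-placement rules of the model ensure that $l$ quasi-disjoint MH trees (one per antenna) can be run at unit per-stream rate; when $l$ is larger, the per-cell MH capacity $\Theta((n/m)^{1/2-\epsilon})$ becomes the binding constraint, explaining the $\min$. Multiplying by $m$ cells concludes the bound. The main technical obstacle in the overall argument is the ISH step: one must verify that the aggregate out-of-cell interference at each BS antenna remains $O((n/m)^{-\alpha/2})$ under the chosen reuse pattern, and that the trace approximation of $\log\det$ is valid throughout the operating regime, so that neither the $\Theta(l)$ array/DoF gain nor the long-distance SNR is lost in order.
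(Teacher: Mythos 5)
Your decomposition is exactly the paper's: MH and HC supply the third and fourth terms as in \cite{GuptaKumar:00} and \cite{OzgurLevequeTse:07}, the ISH protocol (per-cell single-hop SIMO MAC uplink and MISO BC downlink, with out-of-cell interference treated as noise) supplies the first term via Lemma~\ref{LEM:accessISH}, and the IMH protocol (per-cell multi-hop toward the BS antennas, capped by the number of routing cells on the BS boundary) supplies the second via Lemma~\ref{LEM:IMH}; summing over the $m$ cells and taking the best of the four schemes is precisely how Theorem~\ref{THM:exlower} is assembled. One small correction on the interference condition you say must be verified: requiring the aggregate out-of-cell interference at a BS antenna to be $O((n/m)^{-\alpha/2})$ is both unnecessary and false --- summing over layers of interfering cells gives $\Theta((m/n)^{\alpha/2-1})$ (Lemma~\ref{LEM:intISH}), which is $n/m$ times the per-link signal power. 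All that is needed is that the interference-plus-noise power $N_I$ stays $\Theta(1)$, which holds since $\alpha>2$; indeed the paper activates all cells simultaneously with no reuse pattern.

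The genuine gap is the step you flag but do not close: the ``trace approximation'' $\log\det(\mathbf{I}_l+\rho\mathbf{H}\mathbf{H}^{\dagger})\approx\rho\tr(\mathbf{H}\mathbf{H}^{\dagger})$ is only an \emph{upper} bound in general, since $\log(1+x)\le x\log e$, while achievability requires the matching \emph{lower} bound. The lower bound fails if the spectrum of $\mathbf{H}\mathbf{H}^{\dagger}$ concentrates on a vanishing fraction of its $l$ eigenvalues, so the $\Theta(l)$ DoF factor in $ml(m/n)^{\alpha/2-1}$ is not justified by low per-link SNR alone; one must prove the channel matrix is well conditioned. The paper does this by reducing to the unit-modulus phase matrix $\mathbf{G}$, writing the sum rate as $l$ times the expected log term evaluated at an unordered eigenvalue $\lambda_1$ of $\frac{m}{n}\mathbf{G}\mathbf{G}^{\dagger}$ (whose normalized trace is $1$), and invoking the Paley--Zygmund inequality \cite{Kahane:85}, as in Appendix I of \cite{OzgurLevequeTse:07}, to show $\Pr(\lambda_1>\bar{\lambda})$ is bounded away from zero. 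Some such spectral estimate is indispensable here. Relatedly, the downlink claim needs the uplink--downlink duality argument (MMSE precoding with dirty paper coding under the sum power constraint $nP/m$) rather than a bare appeal to ``MIMO broadcast with CSIT,'' though that part is routine once the uplink bound is in place.
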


The first to fourth terms in~(\ref{EQ:exrate}) correspond to the
achievable rate scalings of the infrastructure-supported single-hop
(ISH), infrastructure-supported multi-hop (IMH), MH, and HC
protocols, respectively, where the two BS-based schemes will be
described in detail later (in Section~\ref{SEC:routing}). As a
result, the best strategy among the four schemes ISH, IMH, MH, and
HC depends on the path-loss exponent $\alpha$, and the parameters
$m$ and $l$ under extended networks. Let us give an intuition for
the achievability result above. For the first term in
(\ref{EQ:exrate}), $ml$ represents the total number of
simultaneously active sources in the ISH protocol while
$(m/n)^{\alpha/2-1}$ comes from a performance degradation due to
power limitation. The second term in (\ref{EQ:exrate}) represents
the total number of sources that can send their own packets
simultaneously using the IMH protocol. From the achievable rates of
each scheme, the interesting result below is obtained under each
network condition.

\begin{remark} \label{REM:extended}
The best achievable one among the four schemes and its scaling
exponent $e(\alpha,\beta,\gamma)$ in (\ref{EQ:exponent}) are shown
in TABLE \ref{T:table1} according to the two-dimensional operating
regimes on the achievable throughput scaling with respect to the
scaling parameters $\beta$ and $\gamma$ (see Fig.
\ref{FIG:scaling_extended}). This result is analyzed in Appendix
\ref{PF:extended}. Operating regimes A--D are shown in Fig.
\ref{FIG:scaling_extended}. It is important to verify the best
protocol in each regime. In Regime A, where $\beta$ and $\gamma$ are
small, the infrastructure is not helpful. In other regimes, we
observe BS-based protocols are dominant in some cases depending on
the path-loss exponent $\alpha$. For example, Regime D has the
following characteristics: the HC protocol has the highest
throughput when the path-loss attenuation is small, but as the
path-loss exponent $\alpha$ increases, the best scheme becomes the
ISH protocol. This is because the penalty for long-range
multiple-input multiple-output (MIMO) transmissions of the HC
increases. Finally, the IMH protocol becomes dominant when $\alpha$
is large since the ISH protocol has a power limitation at the high
path-loss attenuation regime.
\end{remark}

\subsection{Cut-Set Upper Bound}

We now turn our attention to presenting the cut-set upper bound of
the total throughput $T_n$. The upper
bound~\cite{OzgurLevequeTse:07} for pure ad hoc networks of unit
area is extended to our dense network model.

\begin{theorem} \label{THM:denseUpp}
The total throughput $T_n$ is upper-bounded by $n\log n$ whp in a
dense network with infrastructure.
\end{theorem}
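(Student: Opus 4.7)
The plan is to adapt the information-theoretic cut-set argument of Ozgur--Leveque--Tse for pure ad hoc dense networks to the hybrid setting, exploiting the fact that the constraint $ml=O(n)$ prevents the BSs from contributing more than $O(n)$ additional antenna dimensions. First I would partition the unit square with a vertical cut through the middle, which whp places $\Theta(n)$ wireless nodes on each side and at most $m/2+O(\sqrt{m})$ BSs on each side. A standard Chernoff argument shows that a constant fraction of the $n$ S--D pairs has its source and destination on opposite sides of the cut; let $L$ and $R$ denote the sets of all transmit antennas and all receive antennas on the left and right of the cut, respectively (wireless nodes contribute one antenna each, each BS contributes its $l$ antennas).

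Next I would apply the max-flow min-cut bound, giving
\begin{equation*}
  T_n \le I(X_L ; Y_R \mid X_R) \le \log\det\!\left(\mathbf{I} + \frac{P}{N_0}\,\mathbf{H}\mathbf{H}^{\dagger}\right),
\end{equation*}
where $\mathbf{H}$ is the matrix of channel gains from $L$ to $R$. To upper bound the capacity we are allowed to be as generous as possible to the network, so I would assume all BSs on each side fully cooperate (this is consistent with the wired backhaul) and assume each antenna operates under the per-node constraint $P$. Since $|L|,|R|\le n+ml=O(n)$ by the hypothesis $ml=O(n)$, the matrix $\mathbf{H}$ has $O(n)$ rows and columns.

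Then I would apply Hadamard's inequality to get
\begin{equation*}
  \log\det\!\left(\mathbf{I} + \tfrac{P}{N_0}\,\mathbf{H}\mathbf{H}^{\dagger}\right) \;\le\; \sum_{k\in R} \log\!\left(1 + \tfrac{P}{N_0}\,\|\mathbf{h}_k\|^{2}\right),
\end{equation*}
where $\mathbf{h}_k$ is the $k$-th row of $\mathbf{H}$ and $\|\mathbf{h}_k\|^{2}=\sum_{j\in L} r_{kj}^{-\alpha}$. Because the network occupies unit area with $O(n)$ nodes, the geometric bound used in~\cite{OzgurLevequeTse:07} (based on an annular partition around each receiver, together with the minimum-separation guarantee from the BS radius scaling $\epsilon_0/\sqrt{m}$) shows $\|\mathbf{h}_k\|^{2}=O(n^{\alpha/2})$ whp uniformly over the $O(n)$ receive antennas. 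Substituting and summing the $O(n)$ log-terms yields $T_n = O(n\log n)$ whp, which is the desired bound.

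The main obstacle is accommodating the BS cooperation cleanly: since the BSs are wired together with unlimited backhaul, they could in principle act as one super-receiver with arbitrary processing, and one has to verify that this super-receiver's MIMO capacity is still controlled. The assumption $ml=O(n)$ is exactly what saves us, keeping the number of receive dimensions linear in $n$; a secondary, mostly routine, step is the geometric bound on the row norms $\|\mathbf{h}_k\|^{2}$, where care is needed to ensure that receivers placed at BS antennas (which are close to each other but separated from wireless nodes by $\Omega(1/\sqrt{m})$) do not create unbounded channel gains—this is guaranteed by the stated BS radius scaling.
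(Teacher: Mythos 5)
Your proposal is correct in its conclusion and overall logic, but it takes a genuinely different route from the paper. The paper does not use a bisecting cut at all: it upper-bounds each S--D pair's rate individually by the capacity of the SIMO ``broadcast cut'' from that single source to \emph{all} other nodes and all $ml$ BS antennas, i.e., $T_n\le\sum_{i=1}^n\log\bigl(1+\frac{P}{N_0}(\sum_{k\neq i}|h_{ki}|^2+\sum_s\|{\bf h}_{si}^u\|^2)\bigr)$, and then invokes a minimum-distance lemma (Lemma~\ref{LEM:distance}: whp every node-to-node and node-to-antenna distance exceeds $1/n^{1+\epsilon_1}$) to bound each summand by $\log(1+\frac{P}{N_0}n^{(1+\epsilon_1)\alpha}(n-1+ml))=O(\log n)$. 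This is lighter than your network-bisecting MIMO cut: it needs no Hadamard step on a large matrix, no accounting of which fraction of pairs crosses the cut, and no treatment of cooperating BS transmitters or their $nP/m$ power constraints. Your approach does go through in order, because after Hadamard each diagonal term is the log of a quantity polynomial in $n$, but two of your quantitative claims need repair. First, $\|\mathbf{h}_k\|^2=O(n^{\alpha/2})$ is not the right whp bound: nodes on opposite sides of the cut (and nodes next to boundary antennas) can be as close as the minimum inter-node distance $1/n^{1+\epsilon_1}$, so the row norm is only $O(n^{1+(1+\epsilon_1)\alpha})$ --- still harmless inside the logarithm. Second, wireless nodes are \emph{not} separated from BS antennas by $\Omega(1/\sqrt{m})$; the exclusion zone only forbids nodes inside the BS disc, and a node just outside the boundary can sit arbitrarily close to a boundary antenna. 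What actually controls these gains is the whp minimum-distance statement of Lemma~\ref{LEM:distance}, not the BS radius scaling, so that is the lemma you must cite (and prove) at that step. With those substitutions your argument is a valid, if heavier, alternative proof of the same $O(n\log n)$ bound.
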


Note that the same upper bound as that of~\cite{OzgurLevequeTse:07}
assuming no BSs is found in dense networks. This upper bound means
that $n$ S--D pairs can be active with genie-aided interference
removal between simultaneously transmitting nodes, while providing a
power gain of $\log n$. In addition, it is examined how the upper
bound is close to the achievable throughput scaling.

\begin{remark}
Based on the above result, it is easy to see that the achievable
rate in (\ref{EQ:ratedense}) and the upper bound are of the same
order up to a factor $\log n$ in dense networks with the help of
BSs, and thus the exponent of the capacity scaling is given by
$e(\alpha,\beta,\gamma)=1$. The HC is therefore order-optimal and we
may conclude that infrastructure cannot improve the throughput
scaling in dense networks.
\end{remark}

In extended networks, an upper bound is established based on the
characteristics at power-limited regimes shown
in~\cite{OzgurLevequeTse:07,OzgurJohariTseLeveque:10}, and is
presented in the following theorem.

\begin{theorem} \label{THM:EXupper}
In an extended network, the total throughput $T_n$ is upper-bounded
by
\begin{equation}
T_n=O\left(n^{\epsilon}\max\left\{ml\left(\frac{m}{n}\right)^{\alpha/2-1},m\min\left\{l,\sqrt{\frac{n}{m}}\right\},\sqrt{n},n^{2-\alpha/2}\right\}\right)
\label{EQ:excutset}
\end{equation}
whp for an arbitrarily small $\epsilon>0$.
\end{theorem}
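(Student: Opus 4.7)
The plan is to establish (\ref{EQ:excutset}) via a cut-set argument applied to the extended network of area $n$, exploiting the wired, infinite-capacity BS backbone by treating all $m$ BSs (with their $ml$ antennas collectively) as a single super-node. A vertical line through the center of the square splits the $n$ wireless nodes into left and right halves $V_L$ and $V_R$; a standard balls-in-bins estimate shows that $\Theta(n)$ of the $n$ S--D pairs straddle this cut whp, so $T_n$ is bounded, up to a constant, by the MIMO cut-set capacity $C_{\mathrm{cut}}$ from $V_L$ to $V_R\cup\{\text{super-BS}\}$, with the super-BS placed on the destination side.

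Next, I would split $C_{\mathrm{cut}}$ additively into two parallel MIMO contributions (losing only a constant factor, which is absorbed into the $n^\epsilon$ slack): the direct wireless piece $V_L\to V_R$, and the uplink piece $V_L\to\text{super-BS}$, since any subsequent relaying by the wired backbone is free. For the direct wireless piece, the cut-set machinery developed in \cite{OzgurLevequeTse:07,OzgurJohariTseLeveque:10} for pure extended ad hoc networks applies verbatim and gives $O(n^\epsilon\max\{\sqrt{n},\,n^{2-\alpha/2}\})$, which accounts for the third and fourth terms inside the max in (\ref{EQ:excutset}).

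For the uplink piece I would bound the MIMO capacity $\log\det(I+H^\dagger P H)$, where $H$ is the $ml\times\Theta(n)$ path-loss-plus-phase matrix from $V_L$ to all $ml$ BS antennas, by two complementary estimates. In the power-limited regime I would adapt the disk-based power-transfer technique of \cite{OzgurJohariTseLeveque:10} cell by cell: for each of the $m$ cells, the aggregate signal power received at the $l$ BS antennas from the in-cell nodes in $V_L$ scales as $l(m/n)^{\alpha/2-1}$ for $\alpha>2$, once the near-field high-SNR contributions are separated out and handled by the DoF bound below; summing over the $m$ cells yields the first term $O(n^\epsilon ml(m/n)^{\alpha/2-1})$. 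In the DoF-limited regime I would use $\log\det(I+H^\dagger P H)\le \rank(H)\log(1+\lambda_{\max})$ and argue that the per-cell effective rank is $\min\{l,\sqrt{n/m}\}$: at most $l$ from antenna counting, and at most $\sqrt{n/m}$ from a far-field dimension-counting argument for an antenna array of linear extent $\Theta(\sqrt{n/m})$, as developed in \cite{OzgurLevequeTse:07}. Summing over the $m$ cells delivers $O(n^\epsilon m\min\{l,\sqrt{n/m}\})$, the second term.

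The main obstacle will be the uplink analysis just outlined. A naive Hadamard estimate $\log\det(I+A)\le \tr(A)/\ln 2$ would give only $O(ml)$ for the uplink piece, which is loose by a factor of $(m/n)^{\alpha/2-1}$ in the ISH-dominated regime, so a per-cell analysis that carefully tracks the near-field versus far-field decay of the path-loss matrix, together with the specific antenna-placement rules of Section~\ref{SEC:system}, will be required. Once both regime bounds for the uplink are in hand, taking the worst of them, combining with the direct wireless bound, and folding all constant factors into the $n^\epsilon$ margin yields (\ref{EQ:excutset}).
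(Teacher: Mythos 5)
Your overall architecture---a vertical cut with every BS antenna placed on the destination side, a Hadamard-type split into a node-to-node piece and a node-to-BS piece, near-field contributions capped by a DoF count and far-field contributions capped by total received power---is the same as the paper's, which partitions $D_L$ into a width-$1$ slab right of the cut (giving $\sqrt{n}$), a width-$1$ ring just inside each left-half BS boundary (giving $m\min\{l,\sqrt{n/m}\}$), and the remainder, bounded by $n^{\epsilon}$ times the transferred power via Lemmas~\ref{LEM:Hsing}--\ref{LEM:EXdistance}. However, two of your steps would fail as written. First, the DoF cap: you justify the per-cell bound $\min\{l,\sqrt{n/m}\}$ by an ``effective rank'' of $\sqrt{n/m}$ for an antenna array of linear extent $\Theta(\sqrt{n/m})$, attributed to \cite{OzgurLevequeTse:07}. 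No such aperture-dimension argument is developed there (that reference has no antenna arrays), and under the i.i.d.\ random-phase model of Section~\ref{SEC:system} the matrix to the $l$ antennas is full rank almost surely, so $\rank({\bf H})\log(1+\lambda_{\max})$ yields $l$, not $\sqrt{n/m}$; making an ``effective rank of an aperture'' statement rigorous is a substantial separate problem. The paper obtains the $\sqrt{n/m}$ cap for free from the antenna-placement rules: only the $O(\sqrt{n/m})$ antennas lying in the width-$1$ ring inside each BS boundary are within distance $O(1)$ of any source, Hadamard is applied to those alone, and the interior antennas are deliberately left to the power-transfer bound---their contribution is limited by power, not by rank.

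Second, the power estimate. Your claim that the per-cell far-field received power scales as $l(m/n)^{\alpha/2-1}$ for all $\alpha>2$ is false for $\alpha>3$: Lemma~\ref{LEM:EXdistance} shows that the sources just outside the ring, facing the densely packed interior antennas at spacing $\Theta(\sqrt{n/(ml)})$, contribute $\Theta\left(\frac{n}{\sqrt{l}}\left(\frac{ml}{n}\right)^{\alpha/2}\right)$ in total, which exceeds $ml(m/n)^{\alpha/2-1}$ by a factor $l^{(\alpha-3)/2}$, and these pairs are at distance $\geq 1$ so they are not ``near-field high-SNR'' contributions that your DoF step would catch. The theorem survives only because this excess is separately shown to be $O(\sqrt{nm})=O(m\min\{l,\sqrt{n/m}\})$ in the regime $l=\Omega(\sqrt{n/m})$ and is therefore absorbed into the second term of the max---an absorption step absent from your proposal. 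You would also need to re-establish the largest-singular-value bound for the column-normalized source-to-antenna channel matrix (the analogue of Lemma~\ref{LEM:Hsing}), since the references you lean on prove it only for node-to-node channels; this is routine but is genuine work, not a verbatim application.
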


The relationship between the achievable throughput and the cut-set
upper bound is now examined as follows.

\begin{remark}
The upper bound matches the achievable throughput scaling within
$n^{\epsilon}$ in extended networks with infrastructure, and thus
the scaling exponent in (\ref{EQ:sumEX}) holds. In other words, it
is shown that choosing the best of the four schemes ISH, IMH, MH and
HC is order-optimal for all the operating regimes shown in Fig.
\ref{FIG:scaling_extended} (see TABLE \ref{T:table1}).
\end{remark}


\section{Routing Protocols} \label{SEC:routing}

This section explains the two BS-based protocols. Two conventional
schemes~\cite{GuptaKumar:00,OzgurLevequeTse:07} with no
infrastructure support are also introduced for comparison. Each
routing protocol operates in different time slots to avoid huge
mutual interference. We focus on the description for extended
networks since using the HC scheme~\cite{OzgurLevequeTse:07} is
enough to provide a near-optimal throughput in dense networks.

\subsection{Protocols With Infrastructure Support} \label{SEC:withinfra}

We generalize the conventional BS-based transmission scheme
in~\cite{KulkarniViswanath:03,KozatTassiulas:03,ZemlianovVeciana:05,LiuLiuTowsley:03,LiuThiranTowsley:07}:
a source node transmits its packet to the closest BS, the BS having
the packet transmits it to the BS that is nearest to the destination
of the source via wired BS-to-BS links, and the destination finally
receives its data from the nearest BS. Since there exist both access
(to BSs) and exit (from BSs) routings, different time slots are
used, e.g., even and odd time slots, respectively. We start from the
following lemma.

\begin{lemma} \label{LEM:nodenum}
Suppose $m=n^{\beta}$ where $\beta\in[0,1)$. Then, the number of
nodes inside each cell is between
$((1-\delta_0)n^{1-\beta},(1+\delta_0)n^{1-\beta})$, i.e.,
$\Theta(n/m)$, with probability larger than
\begin{equation} \label{EQ:numprob}
1-n^{\beta}e^{-\Delta(\delta_0)n^{1-\beta}},
\end{equation}
where $\Delta(\delta_0)=(1+\delta_0)\ln(1+\delta_0)-\delta_0$ for
$0<\delta_0<1$ independent of $n$.
\end{lemma}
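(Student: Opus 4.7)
The plan is a direct multiplicative Chernoff bound applied cell-by-cell, followed by a union bound over the $m$ cells.

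First, I would set up the underlying binomial model. Since the $n$ source nodes are placed independently and uniformly over the admissible region (the unit square in the dense case, or the $\sqrt{n}\times\sqrt{n}$ square in the extended case, with the $m$ BS disks of radius $\epsilon_0/\sqrt{m}$ or $\epsilon_0\sqrt{n/m}$ excised), and since each of the $m$ equally sized square cells has the same geometry with an identical BS disk removed, the probability that any one node lies in a prescribed cell is exactly $1/m=n^{-\beta}$. Writing $X_s$ for the number of nodes in cell $s\in\{1,\ldots,m\}$, we obtain $X_s\sim\mathrm{Binomial}(n,n^{-\beta})$ with mean $\mu=n^{1-\beta}$.

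Second, I would apply the standard multiplicative Chernoff inequality to each $X_s$. The upper-tail bound reads
\[
\Pr\bigl[X_s \geq (1+\delta_0)\mu\bigr] \leq \exp\bigl(-\Delta(\delta_0)\,\mu\bigr),
\]
with $\Delta(\delta_0)=(1+\delta_0)\ln(1+\delta_0)-\delta_0$, and the lower tail is governed by the rate function $\Delta^{-}(\delta_0)=(1-\delta_0)\ln(1-\delta_0)+\delta_0$. A short derivative check (the difference $\Delta^{-}-\Delta$ vanishes at $0$ and has derivative $-\ln(1-\delta_0^2)>0$) shows $\Delta^{-}(\delta_0)>\Delta(\delta_0)$ on $(0,1)$, so both tails are dominated by $\exp(-\Delta(\delta_0)\mu)$. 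Combining them yields
\[
\Pr\bigl[\,X_s \notin \bigl((1-\delta_0)n^{1-\beta},\,(1+\delta_0)n^{1-\beta}\bigr)\,\bigr] \;\leq\; 2\exp\bigl(-\Delta(\delta_0)\,n^{1-\beta}\bigr).
\]

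Finally, a union bound over the $m=n^\beta$ cells delivers the claimed bound, up to an immaterial factor of $2$ that can either be absorbed by a tiny shrinkage of $\Delta(\delta_0)$ or ignored, since the lemma is used only to conclude that the deviation event has vanishingly small probability at the scales of interest. I do not anticipate any real obstacle: the argument is essentially folklore once the binomial model is fixed. The one point worth double-checking is the cancellation of the BS-exclusion area between numerator and denominator of the conditional uniform density, which is what makes the per-cell occupancy probability exactly $1/m$ (rather than a parameter-dependent quantity) and justifies treating $X_s$ as a clean Binomial$(n,1/m)$ variable.
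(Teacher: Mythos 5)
Your proposal is correct and follows essentially the same route as the paper, which simply defers to Lemma 4.1 of the \"Ozg\"ur--L\'ev\^eque--Tse paper: a Chernoff (exponential-moment) bound on the Binomial$(n,1/m)$ cell occupancy with rate function $\Delta(\delta_0)=(1+\delta_0)\ln(1+\delta_0)-\delta_0$, followed by a union bound over the $m=n^{\beta}$ cells. Your explicit observations---that the identical BS-exclusion area in every cell keeps the per-cell probability exactly $1/m$, that the lower-tail rate dominates $\Delta(\delta_0)$, and that the residual factor of $2$ is immaterial---are exactly the ``slight modifications'' the paper alludes to.
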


The proof of this lemma is given by slightly modifying the proof of
Lemma 4.1 in~\cite{OzgurLevequeTse:07}. Note that (\ref{EQ:numprob})
tends to one as $n$ goes to infinity.

\subsubsection{Infrastructure-supported single-hop (ISH) protocol}

In contrast with previous works, the spatial dimensions enabled by
having multiple antennas at each BS are exploited here, and thus
multiple transmission pairs can be supported using a single BS.
Under extended networks, the ISH transmission protocol shown in Fig.
\ref{FIG:ISH} is now proposed as follows:
\begin{itemize}
\item For the access routing, all source nodes in each cell, given by $n/m$ nodes whp from Lemma \ref{LEM:nodenum}, transmit their independent packets simultaneously via single-hop multiple-access to the BS in the same
cell.
\item Each BS receives and jointly decodes packets from source nodes in the same
cell. Signals received from the other cells are treated as noise.
\item The BS that completes decoding its packets transmits them to the BS closest to the corresponding destination by wired BS-to-BS links.
\item For the exit routing, each BS transmits all packets received from other BSs,
i.e., $n/m$ packets, via single-hop broadcast to the destinations in
the cell.
\end{itemize}

Since the network is power-limited, the proposed ISH scheme is used
with the full power, i.e., the transmit powers at each node and BS
are $P$ and $\frac{nP}{m}$, respectively.

For the ISH protocol, more DoF gain is provided compared to
transmissions via MH if $m$ and $l$ are large enough. The power gain
can also be obtained compared to that of the HC scheme in certain
cases.

\subsubsection{Infrastructure-supported multi-hop (IMH) protocol}

The fact that the extended network is power-limited motivates the
introduction of an IMH transmission protocol in which multiple
source nodes in a cell transmit their packets to BS in the cell via
MH, thereby having much higher received power, i.e., more power
gain, than that of the direct one-hop transmission in extended
networks. That is, better long-distance SNR is provided with the IMH
protocol. Similarly, each BS delivers its packets to the
corresponding destinations by IMH transmissions. Under extended
networks, the IMH transmission protocol in Fig. \ref{FIG:IMH} is
proposed as follows:
\begin{itemize}
\item Divide each cell into smaller square cells of area $2\log n$ each, where these smaller cells are called routing
cells (which include at least one node
whp~\cite{GuptaKumar:00,ElGamalMammenPrabhakarShah:06}).
\item For the access routing, $\min\{l,\sqrt{n/m}\}$ source nodes in each cell transmit their independent packets using MH routing to
the corresponding BS in the cell as shown in Fig.~\ref{FIG:access}.
It is assumed that each antenna placed only on the BS boundary
receives its packet from one of the nodes in the nearest neighbor
routing cell. It is easy to see that $\min\{l,\sqrt{n/m}\}$ MH paths
can be supported due to our antenna placement within BSs. Exit
routing is similar, where each antenna on the BS boundary uses power
$P$ that satisfies the power constraint.
\item The BS-to-BS transmissions are the same as the ISH case.
\item Each routing cell operates based on $9$-time division multiple access (TDMA) to avoid causing huge interference to its neighbor cells.
\end{itemize}

Note that the transmit power $\min\{l,\sqrt{n/m}\}$ at each BS, but
not full power, is enough to perform the IMH protocol in the
downlink.

For the IMH protocol, more DoF gain is possible compared to the MH
scheme for large $m$ and $l$. In addition, more power gain can also
be obtained compared to the HC and ISH schemes in certain cases.

\subsection{Protocols Without Infrastructure Support}
\label{SEC:withoutinfra}

The upper bound in Theorem~\ref{THM:denseUpp} is only determined by
the number $n$ of wireless nodes in dense networks. The upper bound
in Theorem~\ref{THM:EXupper} also indicates that either the number
$m$ of BSs or the number $l$ of antennas per BS should be greater
than a certain level in order to obtain improved throughput scalings
in extended networks. This is because otherwise less DoF gain may be
provided compared to that of the conventional schemes without help
of BSs. Thus, transmissions only using wireless nodes may be enough
to achieve the capacity scalings in dense networks or in extended
networks with small $m$ and $l$. In this case, the MH and HC
protocols, which were proposed in~\cite{GuptaKumar:00}
and~\cite{OzgurLevequeTse:07}, respectively, are performed in our
network with infrastructure.


\section{Achievable Throughput in Extended Networks} \label{SEC:scaling}

In this section, the achievable throughput scaling in
Theorem~\ref{THM:exlower} is rigorously analyzed in extended
networks. It is demonstrated that the throughput scaling can be
improved under some conditions by applying two BS-based
transmissions in extended networks.

The transmission rate of the ISH protocol in extended networks will
be shown first.

\begin{lemma} \label{LEM:accessISH}
Suppose that an extended network uses the ISH protocol. Then, the
rate of
\begin{equation}
\Omega\left(l\left(\frac{m}{n}\right)^{\alpha/2-1}\right) \nonumber
\end{equation}
is achievable at each cell for both access and exit routings.
\end{lemma}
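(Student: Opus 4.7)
The plan is to lower-bound, separately for access (uplink MAC) and exit (downlink BC), the per-cell sum capacity of the MIMO channel between the BS and the $K=\Theta(n/m)$ in-cell nodes (whose count is controlled by Lemma~\ref{LEM:nodenum}), treating out-of-cell transmissions as noise. The heart of the argument is that, in this low-SNR regime, the $l\times l$ Gram matrix of the $l\times K$ per-cell channel concentrates around a scaled identity, so the sum rate grows linearly in $l$ rather than being limited by a single dominant direction.

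First I would lay down the geometric and interference bookkeeping common to both directions. Each cell has side $\Theta(\sqrt{n/m})$, and the antenna placement rules in Section~\ref{SEC:system} keep every BS antenna inside its cell, so every in-cell node--antenna link has distance $\Theta(\sqrt{n/m})$ and channel gain squared $\Theta((n/m)^{-\alpha/2})$. Next I would bound the out-of-cell interference by partitioning interfering cells into tiers indexed by $k=1,2,\ldots$: tier $k$ contains $O(k)$ cells at distance $\Theta(k\sqrt{n/m})$, each producing $\Theta(n/m)$ interferers of power $P$ in the uplink, and one BS of total power $nP/m$ in the downlink. Summing over tiers gives total interference power per receive antenna (or per destination node) of order $(n/m)^{1-\alpha/2}\sum_{k\ge 1}k^{1-\alpha}$, which converges and vanishes as $n/m\to\infty$ since $\alpha>2$, so the effective noise-plus-interference $N_0'$ remains $\Theta(1)$ whp.

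For the uplink, the MAC sum capacity is at least $\log\det(I_l+(P/N_0')HH^\dagger)$, where $H$ has entries $e^{j\theta_{si,t}^u}/(r_{si,t}^u)^{\alpha/2}$. The diagonal entry $[HH^\dagger]_{tt}=\sum_i(r_{si,t}^u)^{-\alpha}$ is $\Theta((n/m)^{1-\alpha/2})$ by the distance bound. For $t\ne t'$, the entry is a sum of $K$ terms with independent uniform phases, so a second-moment/Chernoff estimate combined with a union bound over the $l^2-l$ pairs forces it to be $o((n/m)^{1-\alpha/2})$ whp; Weyl's inequality then yields $HH^\dagger\succeq c(n/m)^{1-\alpha/2}I_l$, and hence $\log\det(I_l+c'(n/m)^{1-\alpha/2}I_l)\ge c''\,l(n/m)^{1-\alpha/2}=\Omega(l(m/n)^{\alpha/2-1})$ via $\log(1+x)\ge x/2$ at small $x$. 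For the downlink I would invoke uplink--downlink duality for Gaussian MIMO BCs with full CSIT, which holds under the assumptions of Section~\ref{SEC:system}: the sum BC capacity with total power $nP/m$ equals that of the dual MAC under the same sum-power budget. Allocating that budget uniformly among the $K$ virtual users gives per-user power $P$, so the uplink Gram-matrix analysis transfers verbatim, once the BS-originated interference has been absorbed into $N_0'=\Theta(1)$ as above.

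The main obstacle I expect is the Gram-matrix concentration, particularly in the placement regime $l=\omega(\sqrt{n/m})$, where interior antennas appear and the simple row-wise (Gershgorin) bound on the off-diagonal mass is tight only for $l=O(\sqrt{n/m})$. In that larger-$l$ regime I would need a Marchenko--Pastur-style bound on the smallest eigenvalue of $HH^\dagger/K$ using $l/K=O(1)$ bounded away from $1$ (which holds since $l=O(n/m)$ and in fact typically much less), combined with a tail estimate on the phase sums that survives the union bound over $\Theta(l^2)$ entries. Everything else in the proof reduces to the geometric tier sum for interference and the low-SNR linearization of $\log\det$, which are routine.
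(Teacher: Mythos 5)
Your interference bookkeeping (tier decomposition giving $N_I=\Theta(1)$), the $\log\det$ lower bound for the uplink MAC, and the duality argument with uniform power allocation for the downlink all match the paper's proof of Lemma~\ref{LEM:accessISH} (and of Lemma~\ref{LEM:intISH}). The core step, however, is where you diverge and where there is a genuine gap: you try to prove the much stronger statement ${\bf H}{\bf H}^{\dagger}\succeq c\,(n/m)^{1-\alpha/2}{\bf I}_l$ whp, i.e., a uniform lower bound on the \emph{smallest} eigenvalue of the actual distance-weighted Gram matrix. This runs into two concrete problems. First, the variance profile of ${\bf H}$ is not uniform: a node just outside the BS boundary can be at distance as small as roughly $n^{-1/2}$ (up to polylog factors) from a boundary antenna, so individual entries of ${\bf H}$ can be polynomially large in $n$; the off-diagonal entries $[{\bf H}{\bf H}^{\dagger}]_{tt'}=\sum_i e^{j(\theta^u_{si,t}-\theta^u_{si,t'})}(r^u_{si,t}r^u_{si,t'})^{-\alpha/2}$ then have second moments dominated by these close nodes, and neither the Gershgorin row-sum bound nor an off-the-shelf Marchenko--Pastur bound (which assumes a flat or at least controlled variance profile) applies directly. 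Second, even for the phase-only matrix, the smallest-eigenvalue route degenerates when $l=\Theta(n/m)$ (aspect ratio approaching $1$), a corner of the allowed parameter range $ml=O(n)$; your parenthetical ``bounded away from $1$'' is an assumption, not a consequence of the model.

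The paper avoids both issues with two moves you did not make. (i) It lower-bounds every in-cell link gain by the worst-case gain $(\delta_1\sqrt{n/m})^{-\alpha}$, which replaces ${\bf H}_s{\bf H}_s^{\dagger}$ by a multiple of ${\bf G}{\bf G}^{\dagger}$ where ${\bf G}$ has i.i.d.\ unit-modulus entries --- the close-node problem disappears because only phases remain. (ii) It never controls the smallest eigenvalue: writing $E[\log\det(\cdot)]=l\,E[\log(1+a\lambda_1)]$ for an \emph{unordered} eigenvalue $\lambda_1$ of $\frac{m}{n}{\bf G}{\bf G}^{\dagger}$ (exchangeability), it only needs $\Pr(\lambda_1>\bar\lambda)$ bounded away from zero, which follows from the Paley--Zygmund inequality using just $E[\lambda_1]=1$ and $E[\lambda_1^2]=O(1)$, computable from $\tr({\bf G}{\bf G}^{\dagger})$ and $\tr(({\bf G}{\bf G}^{\dagger})^2)$ and valid for every aspect ratio. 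If you want to salvage your route, you would need to first perform the same worst-case-distance relaxation and then either restrict to $l=o(\sqrt{n/m}/\log n)$ for Gershgorin or supply a smallest-singular-value bound that covers $l=\Theta(n/m)$; as written, the concentration step does not go through.
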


\begin{proof}
In order to prove the result, we need to quantify the total amount
of interference when the ISH scheme is used. We first introduce the
following lemma and refer to Appendix~\ref{PF:intISH} for the
detailed proof.

\begin{lemma} \label{LEM:intISH}
In an extended network with the ISH protocol, the total interference
power $P_I^u$ in the uplink from nodes in other cells to each BS
antenna is upper-bounded by $\Theta((m/n)^{\alpha/2-1})$ whp. Each
node also has interference power $P_I^d=\Theta((m/n)^{\alpha/2-1})$
whp in the downlink from BSs in other cells.
\end{lemma}

Note that when $\alpha>2$ the term $(m/n)^{\alpha/2-1}$ tends to
zero as $n\rightarrow\infty$. Now, the transmission rate for the
access routing is derived as in the following. The signal model from
nodes in each cell to the BS with multiple antennas corresponds to
the single-input multiple-output (SIMO) multiple-access channel
(MAC). Since the maximum Euclidean distance among links of the above
SIMO MAC scales as $\Theta(\sqrt{n/m})$, it is upper-bounded by as
$\delta_1\sqrt{n/m}$, where $\delta_1>0$ is a certain constant. Let
$N_I$ denote the sum of total interference power $P_I^u$ received
from the other cells and noise variance $N_0$. Then, the worst case
noise of this channel has an uncorrelated Gaussian distribution with
zero-mean and variance
$N_I$~\cite{Medard:00,DiggaviCover:01,HassibiHochwald:03}, which
lower-bounds the transmission rate. By assuming full CSI at the
receiver (BS $s$) and performing a minimum mean-square error (MMSE)
estimation~\cite{VaranasiGuess:97,ViswanathTse:03,TseViswanath:05}
with successive interference cancellation (SIC) at BS $s$, the
sum-rate of the SIMO MAC is given
by~\cite{ViswanathTse:03,TseViswanath:05}
\begin{eqnarray} \label{EQ:SIMOMAC}
I({\bf x}_s;{\bf y}_s,{\bf H}_s)\!\!\!\!\!\!\!&&\ge
E\left[\log\det\left({\bf I}_l+\frac{P}{N_I}{\bf H}_s{\bf
H}_s^{\dagger}\right)\right] \nonumber \\ &&\ge
 E\left[\log\det\left({\bf
I}_l+\frac{P}{\delta_1^{\alpha}(n/m)^{\alpha/2}N_I}{\bf G}{\bf
G}^{\dagger}\right)\right],
\end{eqnarray}
where ${\bf x}_s$ denotes the $\frac{n}{m}\times1$ transmit signal
vector, whose elements are nodes in the cell covered by BS $s$,
${\bf y}_s$ is the $l\times1$ received signal vector at BS $s$, and
${\bf H}_s=[{\bf h}_{s1}^u \ {\bf h}_{s2}^u \ \cdots \ {\bf
h}_{s(n/m)}^u]$ (${\bf h}_{si}^u$ for $i=1,\cdots,n/m$ is given in
(\ref{EQ:signalU})). ${\bf G}$ is the normalized matrix, whose
element $g_{ti}$ is given by $e^{j\theta_{si,t}^u}$ and represents
the phase between node $i$ and the $t$-th antenna of BS $s$. Note
that rotating the decoding order among $n/m$ nodes in the cell leads
to the same rate of each node. Then, the above sum-rate is rewritten
as
\begin{eqnarray}
I({\bf x}_s;{\bf y}_s,{\bf H}_s)\!\!\!\!\!\!\!\!&&\ge
E\left[\sum_{i=1}^{l}\log\left(1+\frac{P}{\delta_1^{\alpha}(n/m)^{\alpha/2-1}N_I}\lambda_i\right)\right]
\nonumber\\ &&\ge l
E\left[\log\left(1+\frac{P}{\delta_1^{\alpha}(n/m)^{\alpha/2-1}N_I}\lambda_1\right)\right]
\nonumber
\\ &&\ge
l
\log\left(1+\frac{P}{\delta_1^{\alpha}(n/m)^{\alpha/2-1}N_I}\bar{\lambda}\right)\Pr\left(\lambda_1>\bar{\lambda}\right),
\label{EQ:MIlambda}
\end{eqnarray}
where $\{\lambda_1,\cdots,\lambda_l\}$ are the unordered eigenvalues
of $\frac{m}{n}{\bf G}{\bf G}^{\dagger}$~\cite{Telatar:99} and
$\bar{\lambda}$ is any nonnegative constant. Due to the fact that
$\log(1+x)=(\log e)x+O(x^2)$ for small $x>0$, (\ref{EQ:MIlambda}) is
given by
\begin{equation} \label{EQ:MIlambda2}
I({\bf x}_s;{\bf y}_s,{\bf H}_s)\ge c_0 l
\left(\frac{m}{n}\right)^{\alpha/2-1}\Pr\left(\lambda_1>\bar{\lambda}\right)
\end{equation}
for some constant $c_0>0$ independent of $n$, since $N_I$ has a
constant scaling from Lemma~\ref{LEM:intISH}. By the Paley-Zygmund
inequality~\cite{Kahane:85}, it is possible to lower-bound the
sum-rate in the left-hand side (LHS) of (\ref{EQ:MIlambda2}) by
following the same line as Appendix I in~\cite{OzgurLevequeTse:07},
thus yielding
\begin{equation}
I({\bf x}_s;{\bf y}_s,{\bf H}_s) \ge
c_1l\left(\frac{m}{n}\right)^{\alpha/2-1}, \nonumber
\end{equation}
where $c_1>0$ is some constant independent of $n$.

For the exit routing, the signal model from the BS with multiple
antennas in one cell to nodes in the cell corresponds to the
multiple-input single-output (MISO) broadcast channel (BC). From
Lemma \ref{LEM:intISH}, it is seen that the total interference power
received from the other BSs is bounded. Hence, it is possible to
derive the transmission rate for the exit routing by exploiting an
uplink-downlink
duality~\cite{ViswanathTse:03,TseViswanath:05,ViswanathJindalGoldsmith:03,Yu:06}.
In this case, the transmitters in the downlink are designed by an
MMSE transmit precoding with dirty paper
coding~\cite{Costa:83,CaireShamai:03,WeingartenSteinbergShamai:06}
at each BS, and the rate of the MISO BC is then equal to that of the
dual SIMO MAC with a sum power constraint. More precisely, with full
CSI at the transmitter (BS) and the total transmit power
$\frac{nP}{m}$ in the downlink, the sum-rate of the MISO BC is
lower-bounded by~\cite{ViswanathTse:03}
\begin{eqnarray} \label{EQ:MISOBC}
&& \underset{{\bf Q}_x\ge0}\max E\left[\log\det\left({\bf
I}_l+\frac{1}{N_I'}{{\bf
H}_s'}^{\dagger}{\bf Q}_x{{\bf H}_s'}\right)\right] \nonumber\\
&&\ge  E\left[\log\det\left({\bf I}_l+\frac{P}{N_I'}{{\bf
H}_s'}^{\dagger}{{\bf H}_s'}\right)\right],
\end{eqnarray}
where ${{\bf H}_s'}=[{\bf h}_{1s}^{d~\!T} \ {\bf h}_{2s}^{d~\!T} \
\cdots \ {\bf h}_{(n/m)s}^{d~T}]^T$, $N_I'$ denotes the sum of total
interference power $P_I^d$ from BSs in the other cells and noise
variance $N_0$, and ${\bf Q}_x$ is the
$\frac{n}{m}\times\frac{n}{m}$ positive semi-definite input
covariance matrix which is diagonal and satisfies $\tr({\bf Q}_x)\le
\frac{nP}{m}$. Here, the inequality holds since the rate is reduced
by simply applying the same average power of each user. Due to the
fact that (\ref{EQ:MISOBC}) is equivalent to the right-hand side
(RHS) of (\ref{EQ:SIMOMAC}) (with a change of variables),
$\Omega\left(l(m/n)^{\alpha/2-1}\right)$ is achievable in the
downlink of each cell by following the same approach as that for the
access routing, which completes the proof of
Lemma~\ref{LEM:accessISH}.
\end{proof}

Note that $l$ corresponds to the DoF at each cell provided by the
ISH protocol while $(m/n)^{\alpha/2-1}$ represents the throughput
degradation due to power loss.

The transmission rate for the access and exit routings of IMH
protocol will now be analyzed in extended networks. The number of
source nodes that can be active simultaneously is examined under the
IMH protocol, while maintaining a constant throughput $\Theta(1)$
per S--D pair.

\begin{lemma} \label{LEM:IMH}
When an extended network uses the IMH protocol, the rate of
\begin{equation}
\Omega\left(\min\left\{l,\left(\frac{n}{m}\right)^{1/2-\epsilon}\right\}\right)
\label{EQ:rateIMH}
\end{equation}
is achievable at each cell for both access and exit routings, where
$\epsilon>0$ is an arbitrarily small constant.
\end{lemma}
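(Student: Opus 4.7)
The plan is to show that within each cell of area $n/m$, which contains $\Theta(n/m)$ nodes whp by Lemma \ref{LEM:nodenum}, one can emulate a Gupta--Kumar style multi-hop scheme to support $K=\min\{l,\sqrt{n/m}\}$ parallel source--antenna streams, each at per-stream rate $\Omega(1/(\log n)^{O(1)})=\Omega(n^{-\epsilon})$. First, I would partition each cell into routing cells of area $2\log n$; a Chernoff/union-bound argument as in \cite{GuptaKumar:00,ElGamalMammenPrabhakarShah:06} guarantees that every routing cell contains at least one node whp. For the access routing, I would pick $K$ source nodes and connect each to a distinct antenna on the BS boundary via a nearest-neighbor path (horizontal-then-vertical), using the antenna placement rule to ensure $K$ distinct destinations exist. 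The 9-TDMA schedule on the routing-cell grid is imposed globally, so that any two simultaneously transmitting nodes (whether inside the same cell or across different cells) are separated by distance $\Omega(\sqrt{\log n})$.

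The first technical step is the SINR analysis for a single hop. Since each hop has length $O(\sqrt{\log n})$, the received signal power is $\Omega(P/(\log n)^{\alpha/2})$. The interference at a given receiver decomposes into (i) intra-cell interference from other active routing cells in the same cell and (ii) inter-cell interference from the other cells running IMH. In both cases the density of simultaneous transmitters is at most one per routing cell of area $2\log n$. Approximating the interferer positions by an annular integration starting at radius $\Omega(\sqrt{\log n})$ gives a total interference power of $O(1/(\log n)^{\alpha/2-1})$ for $\alpha>2$, which is dominated by (or comparable to) the signal power. Hence the SINR per hop is $\Omega(1/(\log n)^{\alpha/2})$ and, using $\log(1+x)\ge c\,x$ for small $x$, the achievable per-hop rate is $\Omega(1/(\log n)^{\alpha/2})$.

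The second technical step counts the maximum load on any routing cell. Each of the $K$ paths has length $O(\sqrt{n/m})$ and traverses $O(\sqrt{(n/m)/\log n})$ routing cells, while there are $\Theta((n/m)/\log n)$ routing cells per cell. A standard load-balancing/concentration argument shows the maximum number of paths through any routing cell is $O(\max\{K\sqrt{m\log n/n},1\}\cdot\log n)$, which for $K\le\sqrt{n/m}$ is $O(\log n)$. Time-sharing among the paths traversing the bottleneck routing cell yields a per-path rate of $\Omega(1/(\log n)^{\alpha/2+1})$. Multiplying by the number $K$ of simultaneously active pairs and absorbing all polylogarithmic factors into $n^{\epsilon}$ gives the target $\Omega(K/n^{\epsilon})=\Omega(\min\{l,(n/m)^{1/2-\epsilon}\})$.

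For the exit routing, an essentially symmetric argument applies: each BS boundary antenna acts as a source, transmits at power $P$ (not the full BS budget $nP/m$), and routes to its target destination along reversed multi-hop paths. The SINR-per-hop and load-per-routing-cell bounds are unchanged, so the same total rate $\Omega(\min\{l,(n/m)^{1/2-\epsilon}\})$ follows. The step I expect to be the main obstacle is controlling the inter-cell interference: unlike the pure MH scenario, cells arbitrarily far away in the extended network are simultaneously running IMH, and one must verify that their aggregated contribution at a given receiver is bounded. The argument hinges on the fact that TDMA within each cell, combined with the $2\log n$ routing-cell area, keeps the spatial density of active transmitters at $\Theta(1/\log n)$, so the $\alpha>2$ condition is enough to make the planar interference integral converge.
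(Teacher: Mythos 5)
Your proposal follows essentially the same route as the paper's proof: routing cells of area $2\log n$ with 9-TDMA, a per-hop rate that is constant up to polylogarithmic factors, a count of parallel streams limited by the $\min\{l,\sqrt{n/m}\}$ last-hop routes into the antennas on the BS boundary, and a symmetric exit-routing argument with per-antenna power $P$. You actually supply more detail than the paper (which simply asserts the per-hop SINR is $\Omega(1)$ and defers the relay-load accounting to the cited MH analyses); your minor quantitative slips --- the annular integral gives inter-cell interference $O((\log n)^{-\alpha/2})$ rather than $O((\log n)^{1-\alpha/2})$, and in either case it is the constant noise floor rather than the signal that dominates the SINR denominator --- do not affect the conclusion, since all polylogarithmic losses are absorbed into the $(n/m)^{-\epsilon}$ slack of the lemma.
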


\begin{proof}
This result is obtained by modifying the analysis
in~\cite{GuptaKumar:00,ElGamalMammenPrabhakarShah:06,ShinChungLee:09}
on scaling laws under our BS-based network. We mainly focus on the
aspects that are different from the conventional schemes. From the
9-TDMA operation, the signal-to-interference-and-noise ratio (SINR)
seen by any receiver is given by $\Omega(1)$ with a transmit power
$P$. It can be interpreted that when the worst case
noise~\cite{Medard:00,DiggaviCover:01,HassibiHochwald:03} is assumed
as in the ISH protocol, the achievable throughput per S--D pair is
lower-bounded by $\log(1+\text{SINR})$, thus providing a constant
scaling. First consider the case $l=o(\sqrt{n/m})$ where the number
$l$ of antennas scales slower than the number $n/m$ of nodes in a
cell. Then, it is possible to activate up to $l$ source nodes at
each cell because there exist $l$ routes for the last hop to each BS
antenna in the uplink. On the other hand, when
$l=\Omega(\sqrt{n/m})$, the maximum number of simultaneously
transmitting sources per BS is equal to the number of routing cells
on the BS boundary, which scales with $(n/m)^{1/2-\epsilon}$ for an
arbitrarily small $\epsilon>0$. In the downlink of each cell, the
same number of S--D pairs as that in the uplink is active
simultaneously. Therefore, the transmission rate per each BS is
finally given by (\ref{EQ:rateIMH}), which completes the proof of
Lemma~\ref{LEM:IMH}.
\end{proof}

By using Lemmas~\ref{LEM:accessISH}~and~\ref{LEM:IMH}, we are ready
to show the achievable throughput scaling in extended networks. The
achievable throughputs of the ISH and IMH protocols are given by
\begin{equation}
T_n=\Omega\left(ml\left(\frac{m}{n}\right)^{\alpha/2-1}\right)
\label{EQ:ISHex}
\end{equation}
and
\begin{equation}
T_n=\Omega\left(m\min\left\{l,\left(\frac{n}{m}\right)^{1/2-\epsilon}\right\}\right),
\label{EQ:IMHex}
\end{equation}
respectively, since there are $m$ cells in the network. Throughput
scalings of two conventional protocols that do not utilize the BSs
are also considered. From the results
of~\cite{GuptaKumar:00,OzgurLevequeTse:07},
\begin{equation} \label{EQ:exMH}
T_n=\Omega\left(n^{1/2-\epsilon}\right)
\end{equation}
and
\begin{equation} \label{EQ:exHC}
T_n=\Omega(n^{2-\alpha/2-\epsilon})
\end{equation}
are yielded for the MH and HC schemes, respectively. Hence, the
throughput scaling in extended networks is simply lower-bounded by
the maximum of (\ref{EQ:ISHex})--(\ref{EQ:exHC}), which completes
the proof of Theorem~\ref{THM:exlower}.

In addition, we would like to examine the required rate of each
BS-to-BS transmission.

\begin{remark} \label{REM:CBS}
To see how much data traffic flows on each BS-to-BS link, we first
show the following lemma.

\begin{lemma} \label{LEM:numBSlink}
Let $X_{ki}$ denote the number of destinations in the $k$-th cell
whose source nodes are in the $i$-th cell, where $i, k\in
\{1,\cdots,m\}$. Then, for all $i,k\in\{1,\cdots,m\}$, the following
equation holds whp:
\begin{eqnarray} \label{EQ:Xki}
X_{ki}=\left\{\begin{array}{lll} O\left(\frac{n}{m^2}\right)
&\textrm{if
$n=\omega(m^2)$} \\
O(\log n) &\textrm{if $n=O(m^2)$}
\end{array}\right..
\end{eqnarray}
\end{lemma}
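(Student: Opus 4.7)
The plan is to recognize $X_{ki}$ as a Binomial random variable and combine a Chernoff tail estimate with a union bound over the $m^2$ cell pairs, in the spirit of Lemma~\ref{LEM:nodenum}. For each S--D pair $j\in\{1,\dots,n\}$, let $Z_j^{(i,k)}$ be the indicator of the event that source $j$ lies inside cell $i$ and its destination lies inside cell $k$. Since all $2n$ endpoints are placed independently and uniformly at random on the square punctured by the $m$ BS disks, and every cell retains an area of $(1-\pi\epsilon_0^2)n/m$ after puncturing, the indicators are i.i.d.\ Bernoulli with $p=1/m^2$, so
\[
X_{ki}\;\sim\;\mathrm{Bin}(n,\,1/m^2),\qquad \mu:=\mathbb{E}[X_{ki}]=n/m^2.
\]
I would then apply the standard multiplicative Chernoff bound
\[
\Pr\!\bigl[X_{ki}\ge(1+\delta)\mu\bigr]\;\le\;\exp\!\bigl(-\Delta(\delta)\,\mu\bigr),\qquad \Delta(\delta)=(1+\delta)\ln(1+\delta)-\delta,
\]
i.e., exactly the inequality that drove Lemma~\ref{LEM:nodenum}, followed by a union bound over the $m^2\le n$ ordered cell pairs.

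\medskip

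\textbf{Handling the two regimes.} In Case~1 ($n=\omega(m^2)$), $\mu$ diverges, so taking $\delta$ large enough makes $\exp(-\Delta(\delta)\mu)=o(1/m^2)$; the union bound then certifies $X_{ki}=O(n/m^2)$ uniformly in $(i,k)$ whp. In Case~2 ($n=O(m^2)$), $\mu=O(1)$ and the multiplicative form is wasteful, so I would instead use the Poisson-type tail
\[
\Pr[X_{ki}\ge a]\;\le\;\binom{n}{a}m^{-2a}\;\le\;\left(\frac{e\mu}{a}\right)^{a}.
\]
Choosing $a=K\log n$ with $K$ large enough that $K\ln\!\bigl(K/(e\mu)\bigr)>3$ bounds the right-hand side by $n^{-3}$, so a union bound over $m^2\le n$ pairs again yields $\max_{i,k}X_{ki}\le K\log n=O(\log n)$ whp.

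\medskip

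\textbf{Main obstacle.} The only delicate point is the narrow intermediate sliver $m^2\ll n\lesssim m^2\log m$, in which $\mu\to\infty$ yet too slowly for a fixed $\delta$ to beat the $\log m$ penalty incurred by the union bound. Throughout this sliver, however, $n/m^2=O(\log n)$, so the Case~2 Poisson-tail estimate remains valid and already delivers the bound $O(\log n)=O(\max\{n/m^2,\log n\})$; interpreting the hidden constants in the Case~1 statement so as to absorb this weaker bound closes the gap. Apart from this bookkeeping between the two regimes, everything is a routine Chernoff-plus-union-bound computation in the style of Lemma~\ref{LEM:nodenum}, and I do not anticipate any other substantive obstacle.
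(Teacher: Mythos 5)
Your proposal is correct and follows essentially the same route as the paper: the paper first conditions on every cell containing at most $(1+\delta_0)n/m$ sources and then dominates $X_{ki}$ by a sum of $(1+\delta_0)n/m$ i.i.d.\ Bernoulli$(1/m)$ variables, whereas you work directly with a Binomial$(n,1/m^2)$ variable, but both arguments reduce to a Chernoff-type tail at threshold $\Theta(n/m^2)$ or $\Theta(\log n)$ followed by a union bound over the $m^2$ cell pairs. One trivial slip: since $m=n^{\beta}$ with $\beta$ possibly close to $1$, the number of pairs satisfies only $m^2=O(n^2)$ rather than $m^2\le n$, but your $n^{-3}$ tail estimate still absorbs this larger union bound.
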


The proof of this lemma is presented in Appendix~\ref{PF:numBSlink}.
Let $C_{BS}$ denote the rate of each BS-to-BS link. Then, since each
S--D pair transmits at a rate $T_n/n$ and the number of packets
carried simultaneously through each link is bound by (\ref{EQ:Xki})
from Lemma \ref{LEM:numBSlink}, the required rate $C_{BS}$ is given
by
\begin{eqnarray}
C_{BS}=\left\{\begin{array}{lll} \Omega\left(\frac{T_n}{m^2}\right)
&\textrm{if
$n=\omega(m^2)$} \\
\Omega\left(\frac{T_n\log n}{n}\right) &\textrm{if $n=O(m^2)$}
\end{array}\right.. \label{EQ:CBS} \nonumber
\end{eqnarray}
\end{remark}


\section{Cut-Set Upper Bound} \label{SEC:upper}

To see how closely the proposed schemes approach the fundamental
limits in a network with infrastructure, new BS-based cut-set outer
bounds on the throughput scaling are analyzed based on the
information-theoretic approach~\cite{CoverThomas:91}.

\subsection{Dense Networks}

Before showing the main proof of Theorem~\ref{THM:denseUpp}, we
start from the following lemma.

\begin{lemma} \label{LEM:distance}
In our two-dimensional dense network where $n$ nodes are uniformly
distributed and there are $m$ BSs with $l$ regularly spaced
antennas, the minimum distance between any two nodes or between a
node and an antenna on the BS boundary is larger than
$1/n^{1+\epsilon_1}$ whp for an arbitrarily small $\epsilon_1>0$.
\end{lemma}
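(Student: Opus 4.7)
My plan is to establish this lemma through a standard union bound argument over all $O(n^2)$ candidate pairs whose distance we must control. Since the nodes are i.i.d.\ uniform on the unit square (minus the small BS regions whose total area vanishes for large $n$), the pairwise spacing and node-to-antenna spacing can be analyzed by a direct computation of the probability that a uniform point lands in a small disk.

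First I would bound the probability that two specific nodes $i,j$ lie within distance $1/n^{1+\epsilon_1}$. Conditional on the position of node $i$, the region in which node $j$ must lie is contained in a disk of radius $1/n^{1+\epsilon_1}$ and hence of area at most $\pi/n^{2+2\epsilon_1}$. Since the effective density of the uniform distribution is bounded above by a constant (the BS cutouts only shrink the support), this conditional probability is $O(1/n^{2+2\epsilon_1})$. There are $\binom{n}{2}=O(n^2)$ unordered node pairs, so the union bound yields a failure probability of $O(n^{-2\epsilon_1})$, which vanishes as $n\to\infty$.

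Next I would handle the node-to-antenna distances. The antennas on the BS boundaries are deterministically placed, and the total number of such boundary antennas is at most $m\cdot\min\{l,\sqrt{n/m}\}=O(\sqrt{nm})=O(n)$ by the antenna placement rules in Section \ref{SEC:system}. For each fixed boundary antenna, the probability that a particular node falls within distance $1/n^{1+\epsilon_1}$ is again $O(1/n^{2+2\epsilon_1})$ by the same disk-area argument. Union bounding over the $O(n)$ boundary antennas and the $n$ nodes gives at most $O(n^2)\cdot O(1/n^{2+2\epsilon_1})=O(n^{-2\epsilon_1})$, which also tends to zero.

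Combining the two bounds via one more union bound shows that, whp, no two nodes and no (node, boundary-antenna) pair lie within distance $1/n^{1+\epsilon_1}$, which is exactly the claim. I do not anticipate a serious obstacle here; the only subtlety worth checking is that the exclusion of the BS regions does not alter the constants in the density bound, and this is immediate because the BS radius scales as $\epsilon_0/\sqrt{m}$, so the total excluded area $m\cdot \pi\epsilon_0^2/m=\pi\epsilon_0^2$ is bounded by a small constant and the conditional density of each node is at most $1/(1-\pi\epsilon_0^2)=\Theta(1)$.
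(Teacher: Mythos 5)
Your proposal is correct and follows essentially the same route as the paper: a disk-area probability bound of order $1/n^{2+2\epsilon_1}$ per pair, combined with union bounds over the $O(n^2)$ node pairs (which the paper delegates to the argument of \"O{}zg\"u{}r et al.) and over the $O(n)$ boundary antennas paired with the $n$ nodes, yielding an overall failure probability of $O(n^{-2\epsilon_1})$. Your additional check that the excluded BS area totals only $\pi\epsilon_0^2$ and hence leaves the uniform density $\Theta(1)$ is a sound (and slightly more explicit) treatment of a point the paper absorbs into the constant $c_{10}$.
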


The proof of this lemma is presented in Appendix~\ref{PF:distance}.
Now we present the cut-set upper bound of the total throughput $T_n$
in dense networks. The proof steps are similar to those
of~\cite{GastparVetterli:05,OzgurLevequeTse:07}. The throughput per
S--D pair is simply upper-bounded by the capacity of the SIMO
channel between a source node and the rest of nodes including BSs.
Hence, the total throughput for $n$ S--D pairs is bounded by
\begin{eqnarray}
T_n\!\!\!\!\!\!\!&&\le\sum_{i=1}^n\log\left(1+\frac{P}{N_0}\left(\sum_{k=1
\atop k\neq i}^{n}|h_{ki}|^2+\sum_{s=1}^{m}\|{\bf
h}_{si}^u\|^2\right)\right) \nonumber
\\ \!\!\!\!\!\!\!&&\le
n\log\left(1+\frac{P}{N_0}n^{(1+\epsilon_1)\alpha}(n-1+ml)\right)
\nonumber
\\ \!\!\!\!\!\!\!&&=c_2 n\log n, \nonumber
\end{eqnarray}
where $\|\cdot\|$ denotes $L_2$-norm of a vector and $c_2>0$ is some
constant independent of $n$. The second inequality holds due to
Lemma \ref{LEM:distance}. This completes the proof of
Theorem~\ref{THM:denseUpp}.

\subsection{Extended Networks}

Consider the cut $L$ in Fig.~\ref{FIG:cutset} dividing the network
area into two halves in an extended random network. Let $S_{L}$ and
$D_{L}$ denote the sets of sources and destinations, respectively,
for the cut in the network. More precisely, under $L$, (wireless)
source nodes $S_{L}$ are on the left half of the network, while all
nodes on the right half and all BS antennas are destinations
$D_{L}$.\footnote{The other cut $\tilde{L}$ can also be considered
in the network. In this case, sources $S_{\tilde{L}}$ represent
antennas at each BS as well as ad hoc nodes on the left half. The
(wireless) destination nodes $D_{\tilde{L}}$ are on the right half.
Since the cut $L$ provides a tight upper bound compared to the
achievable rate, the analysis for the cut $\tilde{L}$ is not shown
in this paper.} In this case, we get the $n\times (n+ml)$ MIMO
channel between the two sets of nodes and BSs separated by the cut.

In extended networks, it is necessary to narrow down the class of
S--D pairs according to their Euclidean distance to obtain a tight
upper bound. In this subsection, the upper bound based on the power
transfer arguments
in~\cite{OzgurLevequeTse:07,OzgurJohariTseLeveque:10} is shown,
where an upper bound is proportional to the total received signal
power from source nodes. The present problem is not equivalent to
the conventional extended setup since a network with infrastructure
support is taken into account. A new upper bound based on hybrid
approaches that consider either the sum of the capacities of the
multiple-input single-output (MISO) channel between transmitters and
each receiver or the amount of power transferred across the network
according to operating regimes, is thus derived. We start from the
following lemma.

\begin{lemma} \label{LEM:randomlog}
Assume a two-dimensional extended network. When the network area
with the exclusion of BS area is divided into $n$ squares of unit
area, there are less than $\log n$ nodes in each square whp.
\end{lemma}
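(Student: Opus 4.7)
The plan is to quantify the distribution of the node count per unit square and then combine a Binomial (Chernoff-type) tail bound with a union bound over the $n$ unit squares. Since $n$ wireless nodes are placed independently and uniformly on the extended network of area $n$ with the BS regions excluded, and since the total excluded area is only $\pi\epsilon_0^2\,n$ (a constant fraction of the network, with $\epsilon_0$ a constant independent of $n$), each node falls in any fixed unit square with probability $p = 1/((1-\pi\epsilon_0^2)n) = \Theta(1/n)$. Hence the number $N_i$ of nodes in the $i$-th unit square is distributed as $\mathrm{Binomial}(n,p)$ with mean $\mu = np = \Theta(1)$.

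First, I would invoke the standard tail estimate $\Pr(N_i \ge k) \le \binom{n}{k} p^k \le (np)^k/k!$, specialize to $k = \lceil \log n \rceil$, and apply Stirling's formula in the form $k! \ge (k/e)^k$. This yields
\[
\Pr(N_i \ge \log n) \;\le\; \left(\frac{c\,e}{\log n}\right)^{\log n}
\]
for some constant $c>0$ independent of $n$, a quantity that decays faster than any polynomial in $n$.

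Second, I would take a union bound over the (at most) $n$ unit squares to obtain
\[
\Pr\!\left(\max_{1\le i\le n} N_i \ge \log n\right) \;\le\; n\left(\frac{c\,e}{\log n}\right)^{\log n},
\]
and observe that $(\log n)^{\log n}$ dominates every polynomial in $n$, so the right-hand side vanishes as $n\to\infty$. This delivers the claimed whp statement that every unit square contains strictly fewer than $\log n$ nodes.

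The argument is essentially the same density argument used for Lemma~\ref{LEM:nodenum} above and in \cite{GuptaKumar:00,ElGamalMammenPrabhakarShah:06}; no genuinely new idea is required. The only minor technical point, and the closest thing to an obstacle, is verifying that the BS exclusion does not distort the per-square occupancy probability in an uncontrolled way. Because $\epsilon_0$ is a fixed constant, the effective node density on the permitted area remains $\Theta(1)$ and only shifts the constant $c$ above, so the Chernoff--union-bound calculation carries through unchanged.
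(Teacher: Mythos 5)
Your proof is correct and follows essentially the same route as the paper, which simply defers to Lemma 1 of~\cite{FranceschettiDouseTseThiran:07} (a Binomial tail bound plus a union bound over the unit squares) "slightly modified" for the BS-based network. Your explicit treatment of the excluded BS area --- noting that it totals a constant fraction $\pi\epsilon_0^2 n$ of the network and hence only perturbs the per-square occupancy probability by a constant factor --- is exactly the modification the paper has in mind.
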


This result can be obtained by applying our BS-based network and
slightly modifying the proof of Lemma 1
in~\cite{FranceschettiDouseTseThiran:07}. For the cut $L$, the total
throughput $T_n$ for sources on the left half is bounded by the
capacity of the MIMO channel between $S_{L}$ and $D_{L}$, and thus
\begin{eqnarray}
T_n \!\!\!\!\!\!\!\!&&\le\underset{{\bf Q}_{L}\ge 0} \max
E\left[\log\det\left({\bf I}_{n+ml}+{\bf H}_{L}{\bf Q}_{L}{\bf
H}_{L}^{\dagger}\right)\right] \nonumber\\ &&=\underset{{\bf
Q}_{L}\ge 0} \max E\left[\log\det\left({\bf I}_{\Theta(n)}+{\bf
H}_{L}{\bf Q}_{L}{\bf H}_{L}^{\dagger}\right)\right], \nonumber
\end{eqnarray}
where the equality holds since $n=\Omega(ml)$.\footnote{Here and in
the sequel, the noise variance $N_0$ is assumed to be 1 to simplify
the notation.} ${\bf H}_{L}$ consists of ${\bf h}_{si}^u$ in
(\ref{EQ:signalU}) for $i\in S_{L}$, $s\in B$, and $h_{ki}$ in
(\ref{EQ:signalW}) for $i\in S_{L}$, $k\in D_{r}$. Here, $B$ and
$D_{r}$ represent the set of BSs in the network and the set of
(wireless) nodes on the right half, respectively. ${\bf Q}_{L}$ is
the positive semi-definite input covariance matrix whose $k$-th
diagonal element satisfies $[{\bf Q}_{L}]_{kk}\le P$ for $k\in
S_{L}$. The set $D_{L}$ ($=B\cup D_r$) is partitioned into three
groups according to their location, as shown in Fig. \ref{FIG:Dcut}.
By generalized Hadamard's inequality~\cite{ConstantinescuScharf:98}
as in~\cite{JovicicViswanathKulkarni:04,OzgurLevequeTse:07},
\begin{eqnarray}
T_n\!\!\!\!\!\!&&\le \underset{{\bf Q}_{L}\ge 0} \max
E\left[\log\det\left({\bf I}_{\sqrt{n}}+{\bf H}_{L}^{(1)}{\bf
Q}_{L}{{\bf
H}_{L}^{(1)}}^{\dagger}\right)\right] \nonumber\\
&&+\underset{{\bf Q}_{L}\ge 0} \max E\left[\log\det\left({\bf
I}_{O(\sqrt{ml})}+{\bf H}_{L}^{(2)}{\bf Q}_{L}{{\bf
H}_{L}^{(2)}}^{\dagger}\right)\right]\nonumber\\&&+\underset{{\bf
Q}_{L}\ge 0} \max E\left[\log\det\left({\bf I}_{\Theta(n)}+{\bf
H}_{L}^{(3)}{\bf Q}_{L}{{\bf H}_{L}^{(3)}}^{\dagger}\right)\right],
\label{EQ:thcut1}
\end{eqnarray}
where ${\bf H}_{L}^{(t)}$ is the matrix with entries $\left[{\bf
H}_{L}^{(t)}\right]_{ki}$ for $i\in S_{L}$, $k\in D_{L}^{(t)}$, and
$t=1,\cdots,3$. Here, $D_{L}^{(1)}$ and $D_{L}^{(2)}$ denote the
sets of destinations located on the rectangular slab with width 1
immediately to the right of the centerline (cut) and on the ring
with width 1 immediately inside each BS boundary (cut) on the left
half, respectively. $D_{L}^{(3)}$ is given by
$D_{L}\setminus(D_{L}^{(1)}\cup D_{L}^{(2)})$. Note that the sets
($D_{L}^{(1)}$ and $D_{L}^{(2)}$) of destinations located very close
to the cut are considered separately since otherwise their
contribution to the total received power will be excessive,
resulting in a loose bound.

Each term in (\ref{EQ:thcut1}) will be analyzed below in detail.
Before that, to get the total power transfer of the set
$D_{L}^{(3)}$, the same technique as that in Section V
of~\cite{OzgurLevequeTse:07} is used, which is the relaxation of the
individual power constraints to a total weighted power constraint,
where the weight assigned to each source corresponds to the total
received power on the other side of the cut. Specifically, each
column $i$ of the matrix ${\bf H}_{L}^{(3)}$ is normalized by the
square root of the total received power on the other side of the cut
from source $i\in S_{L}$. The total weighted power $P_{L,i}^{(3)}$
by source $i$ is then given by
\begin{equation}
P_{L,i}^{(3)}=Pd_{L,i}^{(3)}, \label{EQ:Ptots1}
\end{equation}
where
\begin{equation}
d_{L,i}^{(3)}=\underset{k\in\bar{D}_r \setminus D_{L}^{(1)}}\sum
r_{ki}^{-\alpha}+\underset{s\in B_l, t\in[1,l]}\sum
r_{si,t}^{u~\!-\alpha}. \label{EQ:dsum1}
\end{equation}
Here, $\bar{D}_r$ is the set of destination nodes including BS
antennas on the right half and $B_l$ represents the set of BSs on
the left half. Then, the third term in (\ref{EQ:thcut1}) is
rewritten as
\begin{equation}
\underset{\tilde{\bf Q}_{L}\ge 0} \max E\left[\log\det\left({\bf
I}_{n}+{\bf F}_{L}^{(3)}\tilde{\bf Q}_{L}{{{\bf
F}}{}_{L}^{(3)}}^{\dagger}\right)\right], \label{EQ:cut1third}
\end{equation}
where ${\bf F}_{L}^{(3)}$ is the matrix with entries $\left[{\bf
F}_{L}^{(3)}\right]_{ki}=\frac{1}{\sqrt{d_{L,i}^{(3)}}}\left[{\bf
H}_{L}^{(3)}\right]_{ki}$, which are obtained from (\ref{EQ:dsum1}),
for $i\in S_{L}$, $k\in D_{L}^{(3)}$. Then, $\tilde{\bf Q}_{L}$ is
the matrix satisfying
\begin{equation}
\left[\tilde{\bf
Q}_{L}\right]_{ki}=\sqrt{d_{L,k}^{(3)}d_{L,i}^{(3)}}\left[{\bf
Q}_{L}\right]_{ki}, \nonumber
\end{equation}
which means $\tr(\tilde{\bf Q}_{L})\le\sum_{i\in S_{L}}
P_{L,i}^{(3)}$ (equal to the sum of the total received power from
each source).

We next examine the behavior of the largest singular value for the
normalized channel matrix ${\bf F}_{L}^{(3)}$. From the fact that
${\bf F}_{L}^{(3)}$ is well-conditioned whp, this shows how much it
essentially affects an upper bound of (\ref{EQ:cut1third}), which
will be analyzed later in Lemma \ref{LEM:uppermaxsum}.

\begin{lemma} \label{LEM:Hsing}
Let ${\bf F}_{L}^{(3)}$ denote the normalized channel matrix whose
element is given by $\left[{\bf
F}_{L}^{(3)}\right]_{ki}=\frac{1}{\sqrt{d_{L,i}^{(3)}}}\left[{\bf
H}_{L}^{(3)}\right]_{ki}$. Then,
\begin{equation}
E\left[\left\|{\bf F}_{L}^{(3)}\right\|_2^2\right]\le c_3(\log n)^3,
\label{EQ:ineqHLp}
\end{equation}
where $\|\cdot\|_2$ denotes the largest singular value of a matrix
and $c_3>0$ is some constant independent of $n$.
\end{lemma}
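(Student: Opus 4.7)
The plan is to apply the classical operator-norm bound $\|F_L^{(3)}\|_2^2 \le \|F_L^{(3)}\|_1 \, \|F_L^{(3)}\|_\infty$, where $\|\cdot\|_1$ and $\|\cdot\|_\infty$ denote the maximum column and row $L^1$ norms of the matrix, and then to estimate each of these whp using the geometry of the network, Lemma~\ref{LEM:randomlog}, and the normalization built into $F_L^{(3)}$.

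The first step would be to establish a uniform whp lower bound $d_{L,i}^{(3)} \ge c$ for some constant $c>0$, over all sources $i \in S_L$. Because $D_L^{(3)}$ contains $\Theta(n)$ destinations (wireless nodes plus BS antennas, after removing the strip $D_L^{(1)}$ and the ring $D_L^{(2)}$) spread across area $\Theta(n)$, the mean destination density is $\Theta(1)$; for every source $i$ (including those near the cut or near a BS), one can find a destination at distance $O(1)$ whp, contributing a term of order $1$ to $d_{L,i}^{(3)}$. A counting argument (as in the proof of Lemma~\ref{LEM:randomlog}) followed by a union bound over the $\Theta(n)$ sources then makes this lower bound hold uniformly.

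Next, I would bound the maximum row sum. Fixing $k \in D_L^{(3)}$, the sum $R_k = \sum_{i} r_{ki}^{-\alpha/2}/\sqrt{d_{L,i}^{(3)}} \le c^{-1/2}\sum_i r_{ki}^{-\alpha/2}$ is decomposed along dyadic annuli $A_j(k) = \{i : 2^j \le r_{ki} < 2^{j+1}\}$, with $|A_j(k)| = O(2^{2j}\log n)$ by Lemma~\ref{LEM:randomlog}. Integrating against $r^{-\alpha/2}$ gives a bound that is poly-logarithmic in $n$. For the maximum column sum $C_i = (1/\sqrt{d_{L,i}^{(3)}})\sum_{k} r_{ki}^{-\alpha/2}$, the symmetric dyadic decomposition applies, but here one must account for both wireless nodes on the right half and BS antennas on the left half in $D_L^{(3)}$; the BS-antenna contribution is controlled because the antenna density is $ml/n = O(1)$ per unit area. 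Crucially, whenever the annular integration of $r^{-\alpha/2}$ would grow polynomially (i.e.\ when $\alpha$ is near $2$), one must exploit the normalization term-by-term: the pointwise inequality $r_{ki}^{-\alpha/2}/\sqrt{d_{L,i}^{(3)}} \le 1$ together with a Cauchy--Schwarz bound on the inner annular sums allows one to trade the divergent geometric series for extra factors of $\log n$ arising from $d_{L,i}^{(3)}$ itself. Combining the bounds on $\|F_L^{(3)}\|_1$ and $\|F_L^{(3)}\|_\infty$ yields $\|F_L^{(3)}\|_2^2 \le c_3 (\log n)^3$ whp.

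To pass from whp to an expectation bound, I would observe that on the low-probability complement event (e.g.\ some source has no nearby destination, or some unit square contains more than $\log n$ nodes), the trivial bound $\|F_L^{(3)}\|_2^2 \le \|F_L^{(3)}\|_F^2 = |S_L| = \Theta(n)$ holds deterministically (each column has unit $L^2$ norm by construction); since the complement probability decays faster than any polynomial in $n$ (by the exponential tail in Lemma~\ref{LEM:nodenum}-type estimates), its contribution to the expectation is negligible. The main obstacle I expect is the case of small path-loss exponent $\alpha$ close to $2$, where the naive annular integration diverges as a polynomial in $n$; handling this regime requires carefully coupling the column-sum estimate with the denominator $d_{L,i}^{(3)}$ annulus by annulus, so that the dominant far-field contributions to $\sum_k r_{ki}^{-\alpha/2}$ are paired against the matching contributions to $d_{L,i}^{(3)} = \sum_k r_{ki}^{-\alpha}$ via Cauchy--Schwarz, producing only a $\operatorname{polylog}(n)$ factor rather than a power of $n$.
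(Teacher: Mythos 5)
Your approach has a fundamental gap: the Schur-type bound $\|{\bf F}_{L}^{(3)}\|_2^2 \le \|{\bf F}_{L}^{(3)}\|_1\|{\bf F}_{L}^{(3)}\|_\infty$ discards the random phases entirely, and for this matrix the phases are essential. The maximum column $L^1$ sum is $\max_i (d_{L,i}^{(3)})^{-1/2}\sum_{k} r_{ki}^{-\alpha/2}$; because the exponent here is $\alpha/2$ rather than $\alpha$, a dyadic-annulus count with density $O(\log n)$ per unit area gives $\sum_k r_{ki}^{-\alpha/2}=\Theta(n^{1-\alpha/4})$ up to logarithmic factors whenever $2<\alpha<4$: the sum is dominated by the far field and is polynomially large in $n$. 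The rescue you propose---pairing these far-field terms against $d_{L,i}^{(3)}=\sum_k r_{ki}^{-\alpha}$ annulus by annulus via Cauchy--Schwarz---cannot work, because $d_{L,i}^{(3)}$ is dominated by the \emph{near} field and is at most polylogarithmic (in fact it is $O(x_i^{2-\alpha}\log n)$, decaying in the distance $x_i$ from source $i$ to the cut), so the divergent part of the column sum has no matching mass in the normalizer. This is not a lossy-at-the-margins issue: for the entrywise absolute-value matrix the largest singular value really is polynomially large for $2<\alpha<4$ (test against the all-ones vectors), so any bound that ignores the phases must fail in that regime. A second, independent error is the claimed uniform lower bound $d_{L,i}^{(3)}\ge c$: the destinations in $D_{L}^{(3)}$ lie on the right half (plus, only when $l=\Omega(\sqrt{n/m})$, inside BSs on the left), so a source at horizontal distance $\Theta(\sqrt{n})$ from the cut with no nearby BS antennas has no destination within $O(1)$ and $d_{L,i}^{(3)}=\Theta(n^{1-\alpha/2}\log n)\rightarrow 0$.

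The paper's proof instead uses the trace-moment method of \cite{OzgurLevequeTse:07}: after displacing nodes onto a regular grid with at most $\log n$ nodes per vertex (Lemma~\ref{LEM:randomlog}), it shows that the column sums of the \emph{squared} magnitudes equal $1$ exactly (by the definition of $d_{L,i}^{(3)}$) and that the row sums of squared magnitudes are $O((\log n)^3)$, using $d_{L,i}^{(3)}\ge x_i^{2-\alpha}/(c_{11}\log n)$ and $x_i\le r_{ki}$ to reduce the row sum to $\log n\sum_i r_{ki}^{-2}=O((\log n)^2)$. It then bounds $E[\tr((({\bf F}_{L}^{(3)})^{\dagger}{\bf F}_{L}^{(3)})^q)]\le C_q\, n\,(c_{13}\log n)^{3q}$---here the independence and zero mean of the phases make all unpaired terms vanish in expectation, leaving only pairings counted by the Catalan number $C_q$---and lets $q\rightarrow\infty$. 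To salvage your outline you would have to replace the $\|\cdot\|_1\|\cdot\|_\infty$ step with an argument that genuinely exploits phase cancellation, e.g., this moment computation or a comparable concentration result for random matrices with independent phases; the row/column estimates you do need are then the ones for squared magnitudes, not $L^1$ sums.
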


The proof of this lemma is presented in Appendix \ref{PF:Hsing}.
Using Lemma \ref{LEM:Hsing} yields the following result.

\begin{lemma} \label{LEM:uppermaxsum}
The term shown in (\ref{EQ:cut1third}) is upper-bounded by
\begin{equation}
n^{\epsilon}\underset{i\in S_{L}}\sum P_{L,i}^{(3)}
\label{EQ:Ptransfer}
\end{equation}
whp where $\epsilon>0$ is an arbitrarily small constant and
$P_{L,i}^{(3)}$ is given by (\ref{EQ:Ptots1}).
\end{lemma}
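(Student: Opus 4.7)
The plan is to reduce the log-determinant in (\ref{EQ:cut1third}) to a total-received-power quantity via a standard trace inequality, and then to invoke the operator-norm bound from Lemma~\ref{LEM:Hsing}. First I would apply the scalar inequality $\log(1+x) \leq (\log e)\, x$ for $x \geq 0$ termwise to the nonnegative eigenvalues of ${\bf F}_L^{(3)} \tilde{\bf Q}_L ({\bf F}_L^{(3)})^\dagger$, obtaining
\[
\log\det\left({\bf I}_n + {\bf F}_L^{(3)} \tilde{\bf Q}_L ({\bf F}_L^{(3)})^\dagger\right) \leq (\log e)\, \tr\left({\bf F}_L^{(3)} \tilde{\bf Q}_L ({\bf F}_L^{(3)})^\dagger\right).
\]
A cyclic permutation under the trace, combined with the operator bound $({\bf F}_L^{(3)})^\dagger {\bf F}_L^{(3)} \preceq \|{\bf F}_L^{(3)}\|_2^2\, {\bf I}$ and $\tilde{\bf Q}_L \succeq 0$, reduces the right-hand side further to $(\log e)\, \|{\bf F}_L^{(3)}\|_2^2\, \tr(\tilde{\bf Q}_L)$.

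Next I would tie $\tr(\tilde{\bf Q}_L)$ back to the total weighted power defined in (\ref{EQ:Ptots1}). From the construction of $\tilde{\bf Q}_L$ immediately preceding (\ref{EQ:cut1third}), its diagonal entries satisfy $[\tilde{\bf Q}_L]_{ii} = d_{L,i}^{(3)} [{\bf Q}_L]_{ii} \leq P d_{L,i}^{(3)} = P_{L,i}^{(3)}$, so any feasible $\tilde{\bf Q}_L$ obeys $\tr(\tilde{\bf Q}_L) \leq \sum_{i \in S_L} P_{L,i}^{(3)}$. Taking expectation over the fading and then maximizing over $\tilde{\bf Q}_L$ therefore yields
\[
\underset{\tilde{\bf Q}_L \ge 0}\max E\left[\log\det\left({\bf I}_n + {\bf F}_L^{(3)} \tilde{\bf Q}_L ({\bf F}_L^{(3)})^\dagger\right)\right] \leq (\log e)\, E\left[\|{\bf F}_L^{(3)}\|_2^2\right] \sum_{i \in S_L} P_{L,i}^{(3)}.
\]
Plugging in Lemma~\ref{LEM:Hsing} gives $E[\|{\bf F}_L^{(3)}\|_2^2] \leq c_3 (\log n)^3$, and since $(\log n)^3 = o(n^\epsilon)$ for any $\epsilon > 0$, the polylog factor is absorbed into $n^\epsilon$. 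The high-probability event over random node placement that underlies Lemma~\ref{LEM:Hsing} carries over to this step, so the claimed upper bound $n^\epsilon \sum_{i \in S_L} P_{L,i}^{(3)}$ holds whp.

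The whole argument is essentially a one-line consequence of $\log\det \leq (\log e)\,\tr$ once Lemma~\ref{LEM:Hsing} is in hand, so the main technical obstacle has already been absorbed into that earlier lemma; the only delicate bookkeeping here is to verify that the relaxation from the per-source constraints $[{\bf Q}_L]_{ii} \leq P$ to the aggregate weighted constraint $\tr(\tilde{\bf Q}_L) \leq \sum_i P_{L,i}^{(3)}$ is performed correctly, which is exactly the role of the column normalization introduced just before (\ref{EQ:cut1third}). If one instead tried to keep the per-source power constraint inside $\log\det$ without first normalizing, the resulting bound would be loose by a factor that depends on the worst-case distance asymmetries and would fail to recover the sharp power-transfer form on the right-hand side.
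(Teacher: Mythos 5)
Your proof is correct, and it reaches the bound by a more direct route than the paper. The paper splits (\ref{EQ:cut1third}) into two pieces using the indicator of the event $\mathcal{E}_{{\bf F}_{L}^{(3)}}=\{\|{\bf F}_{L}^{(3)}\|_2^2>n^{\epsilon}\}$: on the complement it applies essentially your chain $\log\det\le(\log e)\tr$ together with the pointwise bound $\|{\bf F}_{L}^{(3)}\|_2^2\le n^{\epsilon}$ and $\tr(\tilde{\bf Q}_{L})\le\sum_{i}P_{L,i}^{(3)}$, while on the event itself it must separately show that the contribution decays to zero, which requires the full moment estimates $E[\tr(({{\bf F}_{L}^{(3)}}^{\dagger}{\bf F}_{L}^{(3)})^q)]\le C_q n(c_{13}\log n)^{3q}$ from the proof of Lemma~\ref{LEM:Hsing} (a first-moment Markov bound is not enough there, since the crude bound on $\log\det$ grows polynomially in $n$). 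You instead bypass the event decomposition entirely: the termwise inequality $\log(1+x)\le(\log e)x$, the cyclic trace identity, the operator bound $\tr(\tilde{\bf Q}_L{{\bf F}_L^{(3)}}^{\dagger}{\bf F}_L^{(3)})\le\|{\bf F}_L^{(3)}\|_2^2\tr(\tilde{\bf Q}_L)$, and the stated expectation bound $E[\|{\bf F}_{L}^{(3)}\|_2^2]\le c_3(\log n)^3$ give a deterministic factor $(\log e)c_3(\log n)^3=o(n^{\epsilon})$ uniformly over the feasible $\tilde{\bf Q}_L$, with the whp qualifier inherited from the node-placement event underlying Lemmas~\ref{LEM:randomlog} and~\ref{LEM:Hsing}. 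This is a genuine simplification: it uses only the conclusion of Lemma~\ref{LEM:Hsing} as stated rather than the intermediate moment bounds, at no cost in the final scaling since any polylogarithmic factor is absorbed into $n^{\epsilon}$. Your closing remark about the necessity of the column normalization is also on point; without passing from $[{\bf Q}_L]_{kk}\le P$ to the weighted constraint $\tr(\tilde{\bf Q}_L)\le\sum_i P_{L,i}^{(3)}$ the trace bound would not produce the power-transfer form on the right-hand side.
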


\begin{proof}
Equation (\ref{EQ:cut1third}) is bounded by
\begin{eqnarray}
\underset{\tilde{\bf Q}_{L}\ge 0} \max E\left[\log\det\left({\bf
I}_{n}+{\bf F}_{L}^{(3)}\tilde{\bf Q}_{L}{{{\bf
F}}{}_{L}^{(3)}}^{\dagger}\right)1_{\mathcal{E}_{{\bf
F}_{L}^{(3)}}}\right]\nonumber\\ +\underset{\tilde{\bf Q}_{L}\ge 0}
\max E\left[\log\det\left({\bf I}_{n}+{\bf F}_{L}^{(3)}\tilde{\bf
Q}_{L}{{{\bf F}}{}_{L}^{(3)}}^{\dagger}\right)1_{\mathcal{E}_{{\bf
F}_{L}^{(3)}}^{c}}\right], \label{EQ:upperE}
\end{eqnarray}
where the event $\mathcal{E}_{{\bf F}_{L}^{(3)}}$ is given by
\begin{equation}
\mathcal{E}_{{\bf F}_{L}^{(3)}}=\left\{\left\|{\bf
F}_{L}^{(3)}\right\|_2^2>n^{\epsilon}\right\} \nonumber
\end{equation}
for an arbitrarily small constant $\epsilon>0$. Then, by using the
result of Lemma~\ref{LEM:Hsing} and applying the proof technique
similar to that in Section V of~\cite{OzgurLevequeTse:07}, it is
possible to prove that the first term in (\ref{EQ:upperE}) decays
polynomially to zero as $n$ tends to infinity, and for the second
term in (\ref{EQ:upperE}), it follows that
\begin{eqnarray}
\underset{\tilde{\bf Q}_{L}\ge 0} \max E\left[\log\det\left({\bf
I}_{n}+{\bf F}_{L}^{(3)}\tilde{\bf Q}_{L}{{{\bf
F}}{}_{L}^{(3)}}^{\dagger}\right)1_{\mathcal{E}_{{\bf
F}_{L}^{(3)}}^{c}}\right] \le n^{\epsilon}\underset{i\in S_{L}}\sum
P_{L,i}^{(3)}, \nonumber
\end{eqnarray}
which completes the proof.
\end{proof}

Note that (\ref{EQ:Ptransfer}) represents the power transfer from
the set $S_{L}$ of sources to the set $D_{L}^{(3)}$ of the
corresponding destinations for a given cut $L$. For notational
convenience, let $d_{L,i}^{(4)}$ and $d_{L,i}^{(5)}$ denote the
first and second terms in (\ref{EQ:dsum1}), respectively. Then,
$Pd_{L,i}^{(4)}$ and $Pd_{L,i}^{(5)}$ correspond to the total
received power from source $i$ to the destination sets
$\bar{D}_r\setminus D_{L}^{(1)}$ and $D_{L}\setminus(D_{L}^{(2)}
\cup \bar{D}_r)$, respectively. The computation of the total
received power of the set $D_{L}^{(3)}$ will now be computed as
follows:
\begin{eqnarray}
\underset{i\in S_{L}}\sum P_{L,i}^{(3)}=\underset{i\in S_{L}}\sum
Pd_{L,i}^{(4)}+\underset{i\in S_{L}}\sum Pd_{L,i}^{(5)},
\label{EQ:sum345}
\end{eqnarray}
which is eventually used to compute (\ref{EQ:Ptransfer}).

First, to get an upper bound on $\sum_{i\in S_{L}}Pd_{L,i}^{(4)}$ in
(\ref{EQ:sum345}), the network area is divided into $n$ squares of
unit area. By Lemma \ref{LEM:randomlog}, since there are less than
$\log n$ nodes inside each square whp, the power transfer can be
upper-bounded by that under a regular network with at most $\log n$
nodes at each square (see~\cite{OzgurLevequeTse:07} for the detailed
description). Such a modification yields the following upper
bound~\cite{OzgurLevequeTse:07} for $\sum_{i\in
S_{L}}Pd_{L,i}^{(4)}$:
\begin{eqnarray}
\underset{i\in S_{L}}\sum Pd_{L,i}^{(4)}\le
\left\{\begin{array}{lll} c_4n^{2-\alpha/2}(\log n)^2 &\textrm{if
$2<\alpha<3$} \\
c_4\sqrt{n}(\log n)^3 &\textrm{if $\alpha=3$}
\\
c_4\sqrt{n}(\log n)^2 &\textrm{if $\alpha>3$}
\end{array}\right. \label{EQ:PdLi5}
\end{eqnarray}
whp for some constant $c_4>0$ independent of $n$. Next, the second
term $\sum_{i\in S_{L}}Pd_{L,i}^{(5)}$ in (\ref{EQ:sum345}) can be
derived as in the following lemma.

\begin{lemma} \label{LEM:EXdistance}
The term $\sum_{i\in S_{L}}Pd_{L,i}^{(5)}$ is given by
\begin{equation}
\sum_{i\in S_{L}}Pd_{L,i}^{(5)}=\left\{\begin{array}{llll} 0
&\textrm{if
$l=o(\sqrt{n/m})$} \\
O\left(nl\left(\frac{m}{n}\right)^{\alpha/2}\log n\right)
&\textrm{if $l=\Omega(\sqrt{n/m})$ and $2<\alpha<3$} \\
O\left(ml\sqrt{\frac{m}{n}}(\log n)^2\right) &\textrm{if
$l=\Omega(\sqrt{n/m})$ and $\alpha=3$} \\
O\left(\frac{n}{\sqrt{l}}\left(\frac{ml}{n}\right)^{\alpha/2}\log
n\right) &\textrm{if $l=\Omega(\sqrt{n/m})$ and
$\alpha>3$}\end{array}\right. \label{EQ:PL1sum}
\end{equation}
whp.
\end{lemma}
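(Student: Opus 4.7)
The plan is to bound the total received power
\[
\sum_{i\in S_L} P\,d_{L,i}^{(5)} \;=\; P\sum_{i\in S_L}\sum_{s\in B_l}\sum_{t}\! r_{si,t}^{u\,-\alpha},
\]
where the inner index $t$ runs only over interior antennas, i.e., those in $D_L^{(3)}$, which by the definition of $D_L^{(2)}$ lie at distance at least $1$ inside the BS boundary. The case $l=o(\sqrt{n/m})$ is immediate: the antenna-placement rule of Section~\ref{SEC:system} puts all $l$ antennas of each BS on the BS boundary, so no antenna lies in $D_L^{(3)}$, the inner sum is empty, and $d_{L,i}^{(5)}=0$.

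For $l=\Omega(\sqrt{n/m})$, each BS carries $\sim l$ interior antennas uniformly distributed in the disc of radius $R_0:=\epsilon_0\sqrt{n/m}$ around its center, at distance at least $1$ from the boundary, so that every source-to-interior-antenna distance is at least $1$. First I would regularize via Lemma~\ref{LEM:randomlog}: each unit-area square of the network contains at most $\log n$ sources, so for any fixed receiver the sum $\sum_i r_i^{-\alpha}$ is dominated by $\log n$ times a deterministic power-law integral. Then I would decompose the total contribution into a \emph{far} part (sources in cells different from the BS's cell) and a \emph{near} part (sources in the same cell).

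The far part follows the long-distance approximation: for a source at center-to-source distance $R\gtrsim R_0$, all $\sim l$ interior antennas of the BS lie within a factor of $2$ of $R$, so the BS contributes $\Theta(lR^{-\alpha})$; using BS density $m/n$ and integrating from the cutoff $R=\sqrt{n/m}$ yields a per-source far contribution of order $l(m/n)^{\alpha/2}/(\alpha-2)$ for $\alpha>2$, and after multiplying by $n$ sources and by $\log n$ this gives $O(nl(m/n)^{\alpha/2}\log n)$, matching the $2<\alpha<3$ line of~(\ref{EQ:PL1sum}); an extra $\log n$ from marginal integrability gives the $\alpha=3$ line. The near part is the crux. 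The key geometric fact is that $l$ uniform points in a disc of radius $R_0$ have typical nearest-neighbor spacing $R_0/\sqrt{l}$, so the nearest interior antenna to a source at distance $d_0$ from the BS boundary sits at distance $\Theta(\max\{d_0,\,R_0/\sqrt{l}\})$; sources in the annulus of width $R_0/\sqrt{l}$ just outside the boundary (of area $\sim R_0\cdot R_0/\sqrt{l}$) each contribute $\sim(R_0/\sqrt{l})^{-\alpha}$, and sources at larger $d_0$ within the cell contribute $\sim l\, d_0^{2-\alpha}/R_0^2$ per unit area after accounting for antennas within distance $\Theta(d_0)$. A direct integration shows that both pieces give the same per-BS order $l^{(\alpha-1)/2}R_0^{2-\alpha}$, which is subleading for $\alpha\le 3$ but dominates for $\alpha>3$; summing over the $m/2$ left-half BSs gives $m^{\alpha/2}l^{(\alpha-1)/2}n^{1-\alpha/2}\log n=(n/\sqrt{l})(ml/n)^{\alpha/2}\log n$, matching the third line of~(\ref{EQ:PL1sum}).

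The main obstacle will be this near-regime analysis for $\alpha>3$. The length scale $R_0/\sqrt{l}$ must be derived from the near-tangent geometry: for a circle of radius $r$ around a source at distance $d_0$ from the BS boundary, the intersection with the BS disc is a circular segment of area of order $(r-d_0)^{3/2}\sqrt{R_0}$ for $r-d_0\ll R_0$, and equating the expected number of interior antennas in this intersection to one pins down the scale. One then has to verify that both the annular contribution ($d_0\lesssim R_0/\sqrt{l}$) and the same-cell tail ($R_0/\sqrt{l}\lesssim d_0\lesssim \sqrt{n/m}$) agree to order $l^{(\alpha-1)/2}R_0^{2-\alpha}$, and that this indeed dominates the far contribution exactly when $\alpha>3$. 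After these geometric integrals the remaining work is standard power-law summation combined with the Chernoff-type regularization of Lemma~\ref{LEM:randomlog}.
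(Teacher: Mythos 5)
Your overall skeleton is sound and partly parallels the paper: the $l=o(\sqrt{n/m})$ case is handled identically (no interior antennas), and the $\log n$ regularization plus power-law summation is the same machinery. Structurally, though, the paper does not split into near and far parts: it replaces \emph{all} interior antennas of all left-half BSs by a single infinite regular grid of spacing $\epsilon_0\sqrt{n\pi/(ml)}$ (the within-BS antenna density), punctured by a $2k\times 2k$ square around each source in the $k$-th layer of its cell, and then performs one two-regime sum over $k=1,\dots,\sqrt{n/m}$. Your far-part computation is correct and reproduces the $2<\alpha<3$ term, and your final arithmetic $m\cdot l^{(\alpha-1)/2}R_0^{2-\alpha}\log n=(n/\sqrt{l})(ml/n)^{\alpha/2}\log n$ is right.

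The genuine gap is in the near-boundary geometry for $\alpha>3$, and it is not merely technical. First, your two statements about the relevant length scale are inconsistent: equating $\rho\,(r-d_0)^{3/2}\sqrt{R_0}$ to one with $\rho=l/R_0^2$ gives $r-d_0\sim R_0/l^{2/3}$, not the $R_0/\sqrt{l}$ you then use (and for $r\ll R_0$ the segment area is in fact $\Theta((r-d_0)^{3/2}\sqrt{r})$, since the lens is a segment of the \emph{small} circle, the BS boundary being locally flat on that scale). Second, and more seriously, if the interior antennas are modeled as uniformly \emph{random} in the disc, the claimed bound fails: whp there are $\Theta(l/R_0)$ interior antennas at depth $\Theta(1)$ inside the boundary (just outside the width-$1$ ring $D_L^{(2)}$), a constant fraction of which have a source within distance $O(1)$, giving $\sum_i d_{L,i}^{(5)}=\Omega(l/R_0)$ per BS, i.e.\ $\Omega(ml\sqrt{m/n})$ in total; for $\alpha>3$ and $ml=o(n)$ this exceeds $\frac{n}{\sqrt{l}}\left(\frac{ml}{n}\right)^{\alpha/2}\log n$ by a polynomial factor. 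The proof only closes because the system model places the interior antennas regularly with per-antenna separation $\Theta(\sqrt{n/(ml)})$, so that every source-to-$D_L^{(3)}$-antenna distance is $\Omega(\sqrt{n/(ml)})$; this is exactly what the paper's grid relaxation encodes (the nearest grid point to a layer-$k$ source is at distance $\zeta\cdot\epsilon_0\sqrt{n\pi/(ml)}$ with $\zeta\ge 1$). You need to replace the random-placement/nearest-neighbor heuristic by this minimum-separation property; once you do, your annulus and tail integrals (with the corrected per-source contribution $\Theta(l\,d_0^{2-\alpha}/R_0^2)$, which is in fact what both the correct lens computation and the paper's grid give) deliver the stated three cases.
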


The proof of this lemma is presented in Appendix
\ref{PF:EXdistance}.

It is now possible to derive the cut-set upper bound in
Theorem~\ref{THM:EXupper} by using Lemmas~\ref{LEM:uppermaxsum} and
\ref{LEM:EXdistance}. For notational convenience, let $T_{n}^{(i)}$
denote the $i$-th term in the RHS of (\ref{EQ:thcut1}) for
$i\in\{1,2,3\}$. By generalized Hadamard's
inequality~\cite{ConstantinescuScharf:98} as
in~\cite{JovicicViswanathKulkarni:04,OzgurLevequeTse:07}, the first
term $T_{n}^{(1)}$ in (\ref{EQ:thcut1}) can be easily bounded by
\begin{eqnarray}
T_{n}^{(1)}\!\!\!\!\!\!\!&&\le \underset{k\in
D_{L}^{(1)}}\sum\log\left(1+\frac{P}{N_0}\underset{i\in S_{L}}\sum
|h_{ki}|^2\right) \nonumber \\ &&\le \bar{c}_1\sqrt{n}(\log n)^2,
\label{EQ:TnL11}
\end{eqnarray}
where $\bar{c}_1>0$ is some constant independent of $n$. Note that
this upper bound does not depend on $\beta$ and $\gamma$. The second
inequality holds since the minimum distance between any source and
destination is larger than $1/n^{1/2+\epsilon_1}$ whp for an
arbitrarily small $\epsilon_1>0$, which is obtained by the
derivation similar to that of Lemma \ref{LEM:distance}, and there
exist no more than $\sqrt{n}\log n$ nodes in $D_{L}^{(1)}$ whp by
Lemma \ref{LEM:randomlog}. The upper bound for $T_n^{(2)}$ is now
derived. Since some nodes in $D_{L}^{(2)}$ are located very close to
the cut and the information transfer to $D_{L}^{(2)}$ is limited in
DoF, the second term $T_{n}^{(2)}$ of (\ref{EQ:thcut1}) is
upper-bounded by the sum of the capacities of the MISO channels.
More precisely, by generalized Hadamard's inequality,
\begin{eqnarray} \label{EQ:TnL12}
T_{n}^{(2)}\!\!\!\!\!\!\!\!&&\le \left\{\begin{array}{lll} \bar{c}_2
ml\log n &\textrm{if
$l=o(\sqrt{n/m})$} \\
\bar{c}_2 \sqrt{nm}\log n &\textrm{if $l=\Omega(\sqrt{n/m})$,}
\end{array}\right. \nonumber \\&& \le \bar{c}_2
m\min\left\{l,\sqrt{\frac{n}{m}}\right\}\log n
\end{eqnarray}
where $\bar{c}_2>0$ is some constant independent of $n$. Next, the
third term $T_{n}^{(3)}$ of (\ref{EQ:thcut1}) will be shown by using
(\ref{EQ:Ptransfer}), (\ref{EQ:PdLi5}) and Lemma
\ref{LEM:EXdistance}. If $l=o(\sqrt{n/m})$, which corresponds to
operating regimes A and B shown in Fig. \ref{FIG:scaling_extended},
then $T_{n}^{(3)}$ is given by
\begin{eqnarray}
T_{n}^{(3)}= \left\{\begin{array}{lll} O(n^{2-\alpha/2+\epsilon})
&\textrm{if $2<\alpha<3$}
\\
O(n^{1/2+\epsilon}) &\textrm{if $\alpha\ge3$.}
\end{array}\right. \nonumber
\end{eqnarray}
Hence, under this network condition,
\begin{eqnarray}
T_{n}\le
c_5n^{\epsilon}\max\left\{ml,\sqrt{n},n^{2-\alpha/2}\right\}
\nonumber
\end{eqnarray}
for some constant $c_5>0$ independent of $n$, which is upper-bounded
by the RHS of (\ref{EQ:excutset}). Now we focus on the case for
$l=\Omega(\sqrt{n/m})$ (regimes C and D in Fig.
\ref{FIG:scaling_extended}). In this case, $T_{n}^{(3)}$ is
upper-bounded by
\begin{eqnarray} \label{EQ:TnL13power}
T_{n}^{(3)}\!\!\!\!\!\!\!\!&&\le\left\{\begin{array}{lll}
\bar{c}_3n^{\epsilon}\left(n^{2-\alpha/2}(\log n)^2+
nl\left(\frac{m}{n}\right)^{\alpha/2}\log n\right) &\textrm{if
$2<\alpha<3$} \\
\bar{c}_3n^{\epsilon}\left(\sqrt{n}(\log
n)^3+ml\sqrt{\frac{m}{n}}(\log n)^2\right) &\textrm{if
$\alpha=3$} \\
\bar{c}_3n^{\epsilon}\left(\sqrt{n}(\log
n)^2+\frac{n}{\sqrt{l}}\left(\frac{ml}{n}\right)^{\alpha/2}\log
n\right) &\textrm{if $\alpha>3$}
\end{array}\right. \nonumber\\ && \le\left\{\begin{array}{lll}
\bar{c}_3n^{\epsilon_2}\max\left\{n^{2-\alpha/2},
nl\left(\frac{m}{n}\right)^{\alpha/2}\right\} &\textrm{if
$2<\alpha<3$} \\
\bar{c}_3n^{\epsilon_2}\max\left\{\sqrt{n},\frac{n}{\sqrt{l}}\left(\frac{ml}{n}\right)^{\alpha/2}\right\}
&\textrm{if $\alpha\ge3$}
\end{array}\right.
\end{eqnarray}
for some constant $\bar{c}_3>0$ and an arbitrarily small constant
$\epsilon_2>\epsilon>0$. From (\ref{EQ:TnL11}), (\ref{EQ:TnL12}),
and (\ref{EQ:TnL13power}), we thus get the following result:
\begin{eqnarray}
T_{n}\!\!\!\!\!\!\!&&\le\left\{\begin{array}{lll}
\bar{c}_4n^{\epsilon}\max\left\{\sqrt{nm},n^{2-\alpha/2},nl\left(\frac{m}{n}\right)^{\alpha/2}\right\}
&\textrm{if
$2<\alpha<3$} \\
\bar{c}_4n^{\epsilon}\max\left\{\sqrt{nm},\frac{n}{\sqrt{l}}\left(\frac{ml}{n}\right)^{\alpha/2}\right\}
&\textrm{if $\alpha\ge3$}
\end{array}\right. \nonumber\\ && \le
\bar{c}_4n^{\epsilon}\max\left\{\sqrt{nm},n^{2-\alpha/2},ml\left(\frac{m}{n}\right)^{\alpha/2-1}\right\},
\nonumber
\end{eqnarray}
where the first and second inequalities hold since
$\sqrt{nm}=\Omega(\sqrt{n})$ and
$\sqrt{nm}=\Omega\left(\frac{n}{\sqrt{l}}\left(\frac{ml}{n}\right)^{\alpha/2}\right)$,
respectively, which results in (\ref{EQ:excutset}). This completes
the proof of Theorem~\ref{THM:EXupper}.

Now we would like to examine in detail the amount of information
transfer by each separated destination set.

\begin{remark}
The information transfer by the BS antennas on the left half, i.e.,
the destination set $D_L \setminus \bar{D}_r$, becomes dominant
under operating regimes B--D (especially at the high path-loss
attenuation regimes) in Fig.~\ref{FIG:scaling_extended}. More
specifically, compared to the pure network case with no BSs, as $m$
and $l$ increases (i.e., regimes B--D), enough DoF gain is obtained
by exploiting multiple antennas at each BS, while the power gain is
provided since all the BSs are connected by the wired BS-to-BS
links. In addition, note that the first to fourth terms in
(\ref{EQ:excutset}) represent the amount of information transferred
to the destination sets $D_{L}\setminus(D_{L}^{(2)} \cup
\bar{D}_r)$, $D_{L}^{(2)}$, $D_{L}^{(1)}$, and $\bar{D}_r\setminus
D_{L}^{(1)}$, and can be achieved by the ISH, IMH, MH, HC schemes,
respectively.
\end{remark}


\section{Conclusion} \label{SEC:conclusion}

The paper has analyzed the benefits of infrastructure support for
generalized hybrid networks. Provided the number $m$ of BSs and the
number $l$ of antennas at each BS scale at arbitrary rates relative
to the number $n$ of wireless nodes, the capacity scaling laws were
derived as a function of these scaling parameters. Specifically, two
routing protocols using BSs were proposed, and their achievable rate
scalings were derived and compared with those of the two
conventional schemes MH and HC in both dense and extended networks.
Furthermore, to show the optimality of the achievability results,
new information-theoretic upper bounds were derived. In both dense
and extended networks, it was shown that our achievable schemes are
order-optimal for all the operating regimes.


\appendix

\section{Appendix}

\subsection{Achievable Throughput With Respect to Operating Regimes} \label{PF:extended}

Let $e_{\text{ISH}}$, $e_{\text{IMH}}$, $_{\text{MH}}$, and
$e_{\text{HC}}$ denote the scaling exponents for the achievable
throughput of the ISH, IMH, MH, and HC protocols, respectively. The
scaling exponents among the above schemes are compared according to
operating regimes A--D shown in Fig. \ref{FIG:scaling_extended}
($\epsilon$ is omitted for notational convenience). From the result
of Theorem~\ref{THM:exlower}, note that $e_{\text{ISH}}$,
$e_{\text{MH}}$, and $e_{\text{HC}}$ are given by
$1+\gamma-\frac{(1-\beta)\alpha}{2}$, $\frac{1}{2}$, and
$2-\frac{\alpha}{2}$, respectively, regardless of operating regimes.
\begin{enumerate}
\item Regime A ($0\le\beta+\gamma<\frac{1}{2}$): $e_{\text{IMH}}=\beta+\gamma$ is obtained. Since
$e_{\text{MH}}>e_{\text{IMH}}>e_{\text{ISH}}$, pure ad hoc
transmissions with no BSs outperform the ISH and IMH protocols.
Hence, the results in Regime A of TABLE \ref{T:table1} are obtained.

\item Regime B ($\beta+\gamma\ge\frac{1}{2}$ and $\beta+2\gamma<1$): $e_{\text{IMH}}$ is the same as that under Regime A. Since
$e_{\text{IMH}}>e_{\text{ISH}}$ and $e_{\text{IMH}}\ge
e_{\text{MH}}$, the IMH always outperforms the ISH and the MH.
Hence, it is found that the HC scheme has the largest scaling
exponent under $2<\alpha<4-2\beta-2\gamma$, but if
$\alpha\ge4-2\beta-2\gamma$ the IMH protocol becomes the best.

\item Regime C ($\beta+2\gamma\ge1$ and
$\gamma<\frac{1}{2}(\beta^2-3\beta+2)$): Remark that
$e_{\text{IMH}}=\frac{1+\beta}{2}$ and $e_{\text{IMH}} \ge
e_{\text{MH}}$. Then, the following inequalities with respect to the
path-loss exponent $\alpha$ are found:
$e_{\text{ISH}}>e_{\text{IMH}}$ for
$2<\alpha<1+\frac{2\gamma}{1-\beta}$ and $e_{\text{ISH}} \le
e_{\text{IMH}}$ for $\alpha\ge1+\frac{2\gamma}{1-\beta}$;
$e_{\text{HC}}>e_{\text{IMH}}$ for $2<\alpha<3-\beta$ and
$e_{\text{HC}}\le e_{\text{IMH}}$ for $\alpha\ge 3-\beta$; and
$e_{\text{HC}}>e_{\text{ISH}}$ for
$2<\alpha<\frac{2(1-\gamma)}{\beta}$ and $e_{\text{HC}}\le
e_{\text{ISH}}$ for $\alpha\ge\frac{2(1-\gamma)}{\beta}$. The best
scheme thus depends on the comparison among
$1+\frac{2\gamma}{1-\beta}$, $3-\beta$, and
$\frac{2(1-\gamma)}{\beta}$. Note that
$3-\beta<\frac{2(1-\gamma)}{\beta}$ and
$3-\beta>1+\frac{2\gamma}{1-\beta}$ always hold under Regime C.
Finally, the best achievable schemes with respect to $\alpha$ are
obtained and are shown in Fig. \ref{FIG:regimeC}.

\item Regime D ($\beta+\gamma<1$ and
$\gamma\ge\frac{1}{2}(\beta^2-3\beta+2)$): The same scaling
exponents for our four protocols are the same as those under Regime
C. The result is obtained by comparing $1+\frac{2\gamma}{1-\beta}$,
$3-\beta$, and $\frac{2(1-\gamma)}{\beta}$ under Regime D. The
following two inequalities $3-\beta\ge\frac{2(1-\gamma)}{\beta}$ and
$3-\beta\le1+\frac{2\gamma}{1-\beta}$ are satisfied, and the best
achievable schemes with respect to $\alpha$ are obtained and shown
in Fig. \ref{FIG:regimeD}.
\end{enumerate}
This coincides with the result shown in TABLE \ref{T:table1}.


\subsection{Proof of Lemma~\ref{LEM:intISH}} \label{PF:intISH}

First consider the uplink case. There are $8k$ interfering cells,
each of which includes $\Theta(n/m)$ nodes whp, in the $k$-th layer
$l_{k}$ of the network as illustrated in Fig. \ref{FIG:layer}. Let
$d_k$ denote the Euclidean distance between a given BS antenna and
any node in $l_{k}$, which is a random variable. Since $d_k$ scales
as $\Theta(k\sqrt{n/m})$, there exists $c_7>c_6>0$ with constants
$c_6$ and $c_7$ independent of $n$, such that
$d_k=c_{8}k\sqrt{n/m}$, where all $c_8$ lies in the interval $[c_6,
c_7]$. Hence, the total interference power $P_I^u$ at each BS
antenna from simultaneously transmitting nodes is upper-bounded by
\begin{eqnarray}
P_I^u\!\!\!\!\!\!\!&&\le\sum_{k=1}^{\infty}\frac{P}{
(m/n)^{\alpha/2-1}}(8k)\frac{n}{m}\left(\frac{m}{(c_6k)^2n}\right)^{\alpha/2}
\nonumber\\
&&=\frac{8P}{c_6^{\alpha}}\sum_{k=1}^{\infty}\frac{1}{k^{\alpha-1}}
\nonumber
\\ &&\le c_9, \nonumber
\end{eqnarray}
where $c_9>0$ is some constant independent of $n$. Now let us
consider the interference in the downlink. The interfering signal
received by node $i$, which is in the cell covered by BS $s$, from
the simultaneously operating BSs $s'\in\{1,\cdots,m\}\setminus
\{s\}$ is given by
\begin{equation}
\sum_{s'\in\{1,\cdots,m\}\setminus \{s\}}{\bf
h}_{is'}^{d}\left(\sum_{j=1}^{n/m}{\bf u}_j^{s'}x_j^{s'}\right),
\nonumber
\end{equation}
where ${\bf u}_j^{s'}$ denotes the $j$-th transmit precoding vector
at BS $s'$ normalized so that its $L_2$-norm is unity and $x_j^{s'}$
is the $j$-th transmit packet at BS $s'$. Since ${\bf u}_j^{s'}$ is
represented by a function of the downlink channel coefficients
between BS $s'$ and nodes communicating with BS $s'$, the terms
$\left[{\bf h}_{is'}^{d}\right]_k\!\cdot\! \left[\sum_j{\bf
u}_j^{s'}x_j^{s'}\right]_k$ are independent for all
$k\in\{1,\cdots,n/m\}$ and $s'\in\{1,\cdots,m\}\setminus \{s\}$.
Using the fact above and the layering technique similar to the
uplink case, an upper bound of the average total interference power
$P_I^d$ at each node in the downlink is obtained as the following:
\begin{equation}
P_I^d\le\sum_{k=1}^{\infty}\frac{(n/m)P}{(m/n)^{\alpha/2-1}}(8k)\left(\frac{m}{(c_{6}k)^2n}\right)^{\alpha/2}
\le c_9', \nonumber
\end{equation}
where $c_9'>0$ is some constant independent of $n$.


\subsection{Proof of Lemma \ref{LEM:numBSlink}} \label{PF:numBSlink}

Let $X_i$ denote the number of sources in the $i$-th cell and
$\mathcal{E}_{x}$ denote the event that $X_i$ is between
$((1-\delta_0)n/m, (1+\delta_0)n/m)$ for all $i\in\{1,\cdots,m\}$,
where $0<\delta_0<1$ is some constant independent of $n$. Then, we
have
\begin{eqnarray} \label{EQ:PrXa}
&&\Pr\left\{X_{ki}< a \mbox{ for all } i,k\in\{1,\cdots,m\}\right\}
\nonumber\\ &&\ge \Pr\left\{\mathcal{E}_x\right\}\Pr\left\{X_{ki}< a
\mbox{ for all } i,k|\mathcal{E}_x\right\} \nonumber\\ &&\ge
\Pr\left\{\mathcal{E}_x\right\}\left(1-m^2\Pr\left\{\sum_{j=1}^{(1+\delta_0)n/m}B_j\ge
a\right\}\right),
\end{eqnarray}
where $\sum_j B_j$ is the sum of $(1+\delta_0)n/m$ independent and
identically distributed (i.i.d.) Bernoulli random variables with
probability
\begin{equation}
\Pr\left\{B_j=1\right\}=\frac{1}{m}. \nonumber
\end{equation}
Here, the second inequality holds since the union bound is applied
over all $i,k\in\{1,\cdots,m\}$. We first consider the case where
$n/m=\omega(m)$, i.e., $0\le\beta<1/2$. By setting
$a=(1+\delta_0)^2n/m^2$, we then get
\begin{eqnarray}
\Pr\left\{\sum_{j=1}^{(1+\delta_0)n/m}B_j\ge (1+\delta_0)^2\frac{n}{m^2}\right\}\!\!\!\!\!\!\!&&=\Pr\left\{e^{s\sum_{j=1}^{(1+\delta_0)n/m}B_j}\ge e^{s(1+\delta_0)^2n/m^2}\right\} \nonumber\\
&& \le e^{-(1+\delta_0)n/m^2\left(s(1+\delta_0)-e^s+1\right)},
\label{EQ:PrB}
\end{eqnarray}
which is derived from the steps similar to the proof of Lemma 4.1
in~\cite{OzgurLevequeTse:07}, where the first inequality comes from
an application of Chebyshev's inequality. Hence, using
(\ref{EQ:numprob}), (\ref{EQ:PrXa}), and (\ref{EQ:PrB}) yields
\begin{eqnarray}
&&\Pr\left\{X_{ki}< (1+\delta_0)^2\frac{n}{m^2} \mbox{ for all }
i,k\in\{1,\cdots,m\}\right\} \nonumber\\
&&\ge \left(1-n^\beta e^{-\Delta(\delta_0)n^{1-\beta}}\right)\left(1-m^2e^{-(1+\delta_0)\Delta(\delta_0)n/m^2}\right) \nonumber\\
&&=\left(1-n^\beta
e^{-\Delta(\delta_0)n^{1-\beta}}\right)\left(1-e^{2\beta\ln
n-(1+\delta_0)\Delta(\delta_0)n^{1-2\beta}}\right), \nonumber
\end{eqnarray}
where $\Delta(\delta_0)=(1+\delta_0)\ln(1+\delta_0)-\delta_0$, by
choosing $s=\ln(1+\delta_0)$, which converges to one as $n$ goes to
infinity. When $n/m=O(m)$, i.e., $1/2\le\beta<1$, setting $a=\ln n$
and $s=(2+\delta_0)\beta$ and following the approach similar to the
first case, we obtain
\begin{eqnarray}
&&\Pr\left\{X_{ki}< \ln n \mbox{ for all }
i,k\in\{1,\cdots,m\}\right\} \nonumber\\
&&\ge \left(1-n^\beta e^{-\Delta(\delta_0)n^{1-\beta}}\right)\left(1-m^2e^{-(1+\delta_0)\left(1-e^{(2+\delta_0)\beta}\right)n/m^2-(2+\delta_0)\beta\ln n}\right) \nonumber\\
&&=\left(1-n^\beta
e^{-\Delta(\delta_0)n^{1-\beta}}\right)\left(1-e^{-\delta_0\beta\ln
n-(1+\delta_0)\left(1-e^{(2+\delta_0)\beta}\right)n^{1-2\beta}}\right),
\nonumber
\end{eqnarray}
which converges to one as $n$ goes to infinity. This completes the
proof.


\subsection{Proof of Lemma \ref{LEM:distance}} \label{PF:distance}

This result can be obtained by slightly modifying the asymptotic
analysis in~\cite{OzgurLevequeTse:07,ElGamalMammenPrabhakarShah:06}.
The minimum node-to-node distance is easily derived by following the
same approach as that in~\cite{OzgurLevequeTse:07} and is proved to
scale at least as $1/n^{1+\epsilon_1}$ with probability
$1-\Theta(1/n^{2\epsilon_1})$. We now focus on how the distance
between a node and an antenna on the BS boundary scales. Consider a
circle of radius $1/n^{1+\epsilon_1}$ around one specific antenna on
the BS boundary. Note that there are no other antennas inside the
circle since the per-antenna distance is greater than
$1/n^{1+\epsilon_1}$. Let $\mathcal{E}_d$ denote the event that $n$
nodes are located outside the circle given by the antenna. Then, we
have
\begin{equation}
P\{\mathcal{E}_d^{\mathcal{C}}\}\le
1-\left(1-\frac{c_{10}\pi}{n^{2+2\epsilon_1}}\right)^{n}, \nonumber
\end{equation}
where $0<c_{10}<1$ is some constant independent of $n$. Hence, by
the union bound, the probability that the event $\mathcal{E}_d$ is
satisfied for all the BS antennas is lower-bounded by
\begin{eqnarray}
1-ml
P\{\mathcal{E}_d^{\mathcal{C}}\}\!\!\!\!\!\!\!&&\ge1-ml\left(1-\left(1-\frac{c_{10}\pi}{n^{2+2\epsilon_1}}\right)^{n}\right)
\nonumber \\ &&\ge
1-n\left(1-\left(1-\frac{c_{10}\pi}{n^{2+2\epsilon_1}}\right)^{n}\right),
\nonumber
\end{eqnarray}
where the second inequality holds since $ml=O(n)$, which tends to
one as $n$ goes to infinity. This completes the proof.


\subsection{Proof of Lemma \ref{LEM:Hsing}} \label{PF:Hsing}

The size of matrix ${\bf F}_{L}^{(3)}$ is $\Theta(n)\times
\Theta(n)$ since $ml=O(n)$. Thus, the analysis essentially follows
the argument in~\cite{OzgurLevequeTse:07} with a slight modification
(see Appendix III in~\cite{OzgurLevequeTse:07} for more precise
description). Consider the network transformation resulting in a
regular network with at most $\log n$ nodes at each square vertex
except for the area covered by BSs. Then, the same node displacement
as shown in~\cite{OzgurLevequeTse:07} is performed, which will
decrease the Euclidean distance between source and destination
nodes. For convenience, the source node positions are indexed in the
resulting regular network. It is thus assumed that the source nodes
under the cut are located at positions $(-i_x+1,i_y)$ where
$i_x,i_y=1,\cdots \sqrt{n}$. In the following, $\sum_{k\in
D_{L}^{(3)}}\left|\left[{\bf F}_{L}^{(3)}\right]_{ki}\right|^2$ and
an upper bound for $\sum_{i\in S_{L}}\left|\left[{\bf
F}_{L}^{(3)}\right]_{ki}\right|^2$ are derived:
\begin{eqnarray}
\sum_{k\in D_{L}^{(3)}}\left|\left[{\bf
F}_{L}^{(3)}\right]_{ki}\right|^2=1 \nonumber
\end{eqnarray}
and
\begin{eqnarray}
\sum_{i\in S_{L}}\left|\left[{\bf
F}_{L}^{(3)}\right]_{ki}\right|^2\!\!\!\!\!\!\!\!&&=\underset{i\in
S_{L}}\sum \left|\frac{1}{\sqrt{d_{L,i}^{(3)}}}\left[{\bf
H}_{L}^{(3)}\right]_{ki}\right|^2 \nonumber\\
&& = \left\{\begin{array}{lll} \underset{i\in S_{L}}\sum
\frac{r_{ki}^{-\alpha}}{d_{L,i}^{(3)}} &\textrm{if
$k\in \bar{D}_r\setminus D_{L}^{(1)}$} \\
\underset{i\in S_{L}}\sum
\frac{r_{si,t}^{u~\!-\alpha}}{d_{L,i}^{(3)}} &\textrm{if $k\in \{t:
t\in[1,l]~\text{for}~s\in B_l\}$}
\end{array}\right. \nonumber\\
&& \le \left\{\begin{array}{lll} \underset{i\in S_{L}}\sum
\frac{r_{ki}^{-\alpha}}{\underset{k\in\bar{D}_r \setminus
D_{L}^{(1)}}\sum r_{ki}^{-\alpha}} &\textrm{if
$k\in \bar{D}_r\setminus D_{L}^{(1)}$} \\
\underset{i\in S_{L}}\sum
\frac{r_{si,t}^{u~\!-\alpha}}{\underset{k\in\bar{D}_r \setminus
D_{L}^{(1)}}\sum r_{ki}^{-\alpha}} &\textrm{if $k\in \{t:
t\in[1,l]~\text{for}~s\in B_l\}$}
\end{array}\right. \nonumber\\
&& \le \left\{\begin{array}{lll} c_{11}\log n\underset{i\in
S_{L}}\sum x_i^{\alpha-2}r_{ki}^{-\alpha} &\textrm{if
$k\in \bar{D}_r\setminus D_{L}^{(1)}$} \\
c_{11}\log n\underset{i\in S_{L}}\sum
x_i^{\alpha-2}r_{si,t}^{u~\!-\alpha} &\textrm{if $k\in \{t:
t\in[1,l]~\text{for}~s\in B_l\}$}
\end{array}\right. \nonumber\\
&& \le \left\{\begin{array}{lll} c_{11}\log n\underset{i\in
S_{L}}\sum r_{ki}^{-2} &\textrm{if
$k\in \bar{D}_r\setminus D_{L}^{(1)}$} \\
c_{11}\log n\underset{i\in S_{L}}\sum r_{si,t}^{u~\!-2} &\textrm{if
$k\in \{t: t\in[1,l]~\text{for}~s\in B_l\}$}
\end{array}\right. \nonumber\\
&&\le c_{11}(\log n)^2 \sum_{i_x,i_y=1}^{\sqrt{n}}
\frac{1}{i_x^2+i_y^2} \nonumber\\
&&\le c_{12} (\log n)^3, \nonumber
\end{eqnarray}
where $\bar{D}_r$ is the set of nodes including BS antennas on the
right half, $B_l$ is the set of BSs in the left half network,
$c_{11}$ and $c_{12}$ are some positive constants independent of
$n$, and $x_i$ denotes the $x$-coordinate of node $i\in S_{L}$ for
our random network ($x_i=1,\cdots,\sqrt{n}$). Here, the second and
fifth inequalities hold since
\begin{equation}
\underset{k\in\bar{D}_r \setminus D_{L}^{(1)}}\sum
r_{ki}^{-\alpha}\ge\frac{x_i^{2-\alpha}}{c_{11}\log n} \nonumber
\end{equation}
and
\begin{equation}
\sum_{i_x,i_y=1}^{\sqrt{n}} \frac{1}{i_x^2+i_y^2}=O(\log n),
\nonumber
\end{equation}
respectively (see Appendix III in~\cite{OzgurLevequeTse:07} for the
detailed derivation). The fourth inequality comes from the result of
Lemma~\ref{LEM:randomlog}. Hence, it is proved that both scaling
results are the same as the random network case shown
in~\cite{OzgurLevequeTse:07}.

Now it is possible to prove the inequality in (\ref{EQ:ineqHLp}) by
following the same line as that in Appendix III
of~\cite{OzgurLevequeTse:07}, which results in
\begin{equation}
E\left[\tr\left(\left({{\bf F}_{L}^{(3)}} ^{\dagger}{{\bf
F}_{L}^{(3)}}\right)^q\right)\right]\le C_q n\left(c_{13}\log
n\right)^{3q}, \label{EQ:HLpineq} \nonumber
\end{equation}
where $C_q=\frac{(2q)!}{q!(q+1)!}$ is the Catalan number for any $q$
and $c_{13}>0$ is some constant independent of $n$. Then, from the
property $\|{\bf F}_{L}^{(3)}\|_2^2=\lim_{q\rightarrow\infty}
\tr(({{\bf F}_{L}^{(3)}}^{\dagger}{\bf F}_{L}^{(3)})^q)^{1/q}$ (see
\cite{HornJohnson:99}), the expectation of the term $\|{\bf
F}_{L}^{(3)}\|_2^2$ is finally given by (\ref{EQ:ineqHLp}), which
completes the proof.


\subsection{Proof of Lemma \ref{LEM:EXdistance}} \label{PF:EXdistance}

When $l=o(\sqrt{n/m})$, there is no destination in $D_{L}^{(5)}$,
and thus $\sum_{i\in S_{L}} Pd_{L,i}^{(5)}$ becomes zero. Hence, the
case for $l=\Omega(\sqrt{n/m})$ is the focus from now on. By the
same argument as shown in the derivation of $\sum_{i\in
S_{L}}Pd_{L,i}^{(4)}$, the network area is divided into $n$ squares
of unit area. Then, by Lemma~\ref{LEM:randomlog}, the power transfer
under our random network can be upper-bounded by that under a
regular network with at most $\log n$ nodes at each square except
for the area covered by BSs. As illustrated in Fig.
\ref{FIG:regularregime}, the nodes in each square are moved together
onto one vertex of the corresponding square. The node displacement
is performed in a sense of decreasing the Euclidean distance between
node $i\in S_{L}$ and the antennas of the corresponding BS, thereby
providing an upper bound for $d_{L,i}^{(5)}$. Layers of each cell
are then introduced, as shown in Fig. \ref{FIG:regularregime}, where
there exist $8(\epsilon_0\sqrt{n/m}+k)$ vertices, each of which
includes $\log n$ nodes, in the $k$-th layer $l_{k}'$ of each cell.
The regular network described above can also be transformed into the
other, which contains antennas regularly placed at spacing
$\epsilon_0\sqrt{\frac{n\pi}{ml}}$ outside the shaded square for an
arbitrarily small $\epsilon_0>0$. Note that the shaded square of
size $2k\times 2k$ is drawn based on a source node in $l_{k}'$ at
its center (see Fig. \ref{FIG:regularregime}). The modification
yields an increase of the term $d_{L,i}^{(5)}$ by source $i$. When
$d_{L,i(k)}^{(5)}$ is defined as $d_{L,i}^{(5)}$ by node $i$ that
lies in $l_{k}'$, the following upper bound for $d_{L,i(k)}^{(5)}$
is obtained:
\begin{eqnarray}
d_{L,i(k)}^{(5)}\!\!\!\!\!\!\!&&\le
\sum_{k_x,k_y=\zeta}^{\infty}\frac{1}{\left(\left(\epsilon_0\sqrt{\frac{n\pi}{ml}}k_x)^2+(\epsilon_0\sqrt{\frac{n\pi}{ml}}k_y\right)^2\right)^{\alpha/2}}
\nonumber \\
&&=\left(\frac{ml}{n}\right)^{\alpha/2}\sum_{k_x,k_y=\zeta}^{\infty}\frac{{\eta}^{\alpha}}{(k_x^2+k_y^2)^{\alpha/2}}
\nonumber \\
&&\le\left(\frac{ml}{n}\right)^{\alpha/2}\sum_{k'=\zeta}^{\infty}\frac{8{\eta}^{\alpha}k'}{{k'}^{\alpha}}
\nonumber\\ &&\le
c_{14}\left(\frac{ml}{n}\right)^{\alpha/2}\left(\frac{1}{\zeta^{\alpha-1}}+\int_{\zeta}^{\infty}\frac{1}{x^{\alpha-1}}dx\right)
\nonumber \\
&&=
{c}_{14}\left(\frac{1}{\zeta}+\frac{1}{\alpha-2}\right)\left(\frac{ml}{n}\right)^{\alpha/2}\frac{1}{\zeta^{\alpha-2}},
\label{EQ:dL1ik} \nonumber
\end{eqnarray}
where $\zeta=1+\lfloor k\eta\rfloor$,
$\eta=\frac{1}{\epsilon_0}\sqrt{\frac{ml}{n\pi}}$, and $c_{14}$ is
some constant independent of $n$. Here, $\lfloor x \rfloor$ denotes
the greatest integer less than or equal to $x$. Hence,
$d_{L,i(k)}^{(5)}$ is given by
\begin{eqnarray}
d_{L,i(k)}^{(5)}=\left\{\begin{array}{ll}
O\left(\left(\frac{ml}{n}\right)^{\alpha/2}\right) &\textrm{if
$k=O\left(\sqrt{\frac{n}{ml}}\right)$} \\
O\left(k^{2-\alpha}\left(\frac{ml}{n}\right)\right) &\textrm{if
$k=\Omega\left(\sqrt{\frac{n}{ml}}\right)$,}\end{array}\right.
\nonumber
\end{eqnarray}
finally yielding
\begin{eqnarray}
\sum_{i\in S_{L}} Pd_{L,i}^{(5)}\!\!\!\!\!\!\!&&\le
P\frac{m}{2}\log n\sum_{k=1}^{\sqrt{n/m}} 8\left(\epsilon_0\sqrt{\frac{n}{m}}+k\right)d_{L,i(k)}^{(5)} \nonumber \\
&&\le c_{15}P\sqrt{nm}\log
n\left[\sum_{k=1}^{\sqrt{n/ml}-1}\left(\frac{ml}{n}\right)^{\alpha/2}+\sum_{k=\sqrt{n/ml}}^{\sqrt{n/m}}k^{2-\alpha}\left(\frac{ml}{n}\right)\right]
\nonumber\\ &&\le c_{15}P\sqrt{nm}\log
n\left[\left(\frac{ml}{n}\right)^{(\alpha-1)/2}+\left(\frac{ml}{n}\right)\left(\left(\frac{ml}{n}\right)^{\alpha/2-1}+\int_{\sqrt{n/ml}}^{\sqrt{n/m}}\frac{1}{k^{\alpha-2}}dx\right)\right]
\nonumber\\ &&\le \left\{\begin{array}{lll}
\frac{3c_{15}P}{3-\alpha}nl\left(\frac{m}{n}\right)^{\alpha/2}\log n
&\textrm{if
$2<\alpha<3$} \\
\frac{3c_{15}P}{2}ml\sqrt{\frac{m}{n}}(\log n)^2 &\textrm{if
$\alpha=3$} \\
\frac{3c_{15}P}{\alpha-3}\frac{n}{\sqrt{l}}\left(\frac{ml}{n}\right)^{\alpha/2}\log
n &\textrm{if $\alpha>3$,}\end{array}\right. \label{EQ:sumalpha}
\end{eqnarray}
where $c_{15}$ is some constant independent of $n$. Here, the first
inequality holds since there exist $8(\epsilon_0\sqrt{n/m}+k)$
vertices in $l_{k}'$ and at most $\log n$ nodes at each vertex.
Equation (\ref{EQ:sumalpha}) yields the result in (\ref{EQ:PL1sum}),
which completes the proof.



\newpage

\begin{figure}
  \begin{center}
  \scalebox{0.74}{\includegraphics{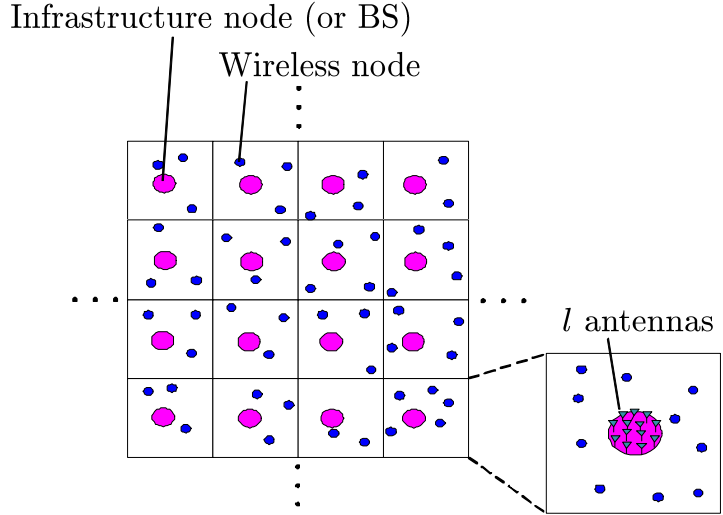}}
  \caption{The wireless ad hoc network with infrastructure support.}
  \label{FIG:infra}
  \end{center}
\end{figure}

\begin{figure}
  \begin{center}
  \scalebox{0.70}{\includegraphics{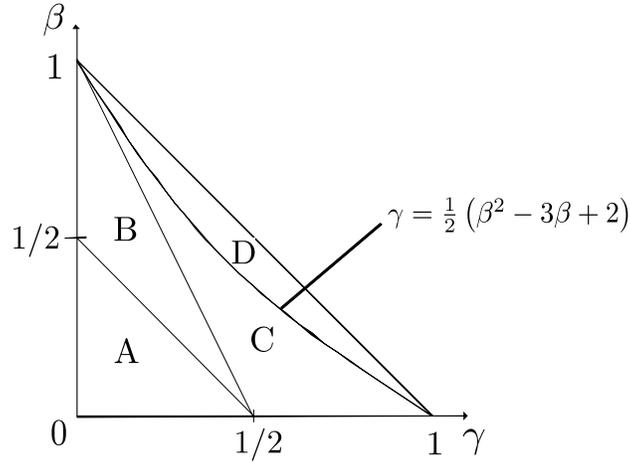}}
  \caption{Operating regimes on the achievable throughput scaling with respect to $\beta$ and $\gamma$.}
  \label{FIG:scaling_extended}
  \end{center}
\end{figure}

\begin{figure}
  \begin{center}
  \scalebox{0.79}{\includegraphics{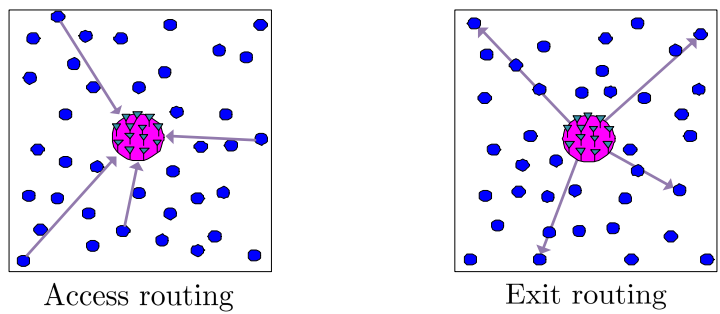}}
  \caption{The infrastructure-supported single-hop (ISH) protocol.}
  \label{FIG:ISH}
  \end{center}
\end{figure}

\begin{figure}
  \begin{center}
  \scalebox{0.79}{\includegraphics{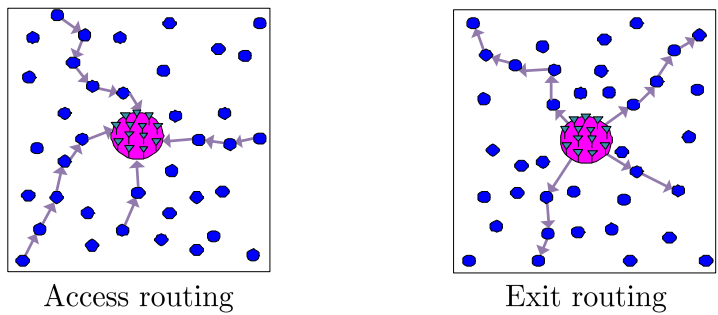}}
  \caption{The infrastructure-supported multi-hop (IMH) protocol.}
  \label{FIG:IMH}
  \end{center}
\end{figure}

\begin{figure}
  \begin{center}
  \scalebox{0.97}{\includegraphics{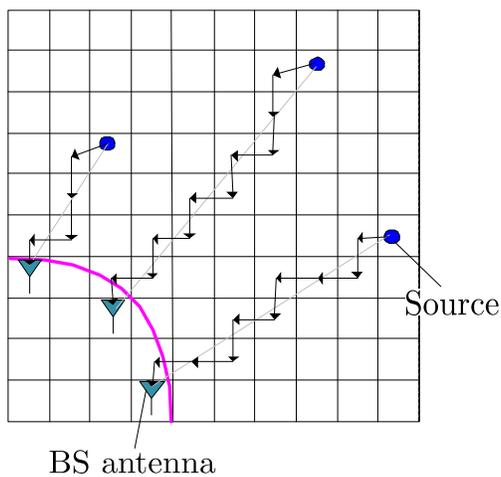}}
  \caption{The access routing in the IMH protocol.}
  \label{FIG:access}
  \end{center}
\end{figure}

\begin{figure}
  \begin{center}
  \scalebox{0.88}{\includegraphics{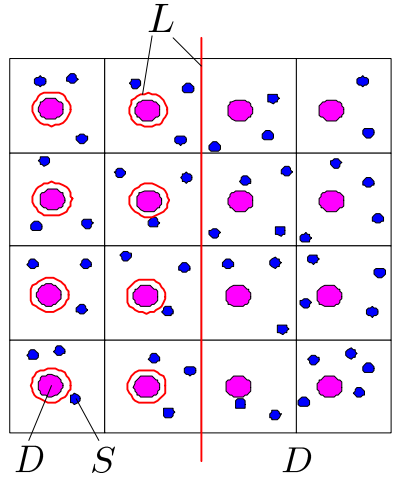}}
  \caption{The cut $L$ in the two-dimensional random network.}
  \label{FIG:cutset}
  \end{center}
\end{figure}

\begin{figure}
  \begin{center}
  \scalebox{0.97}{\includegraphics{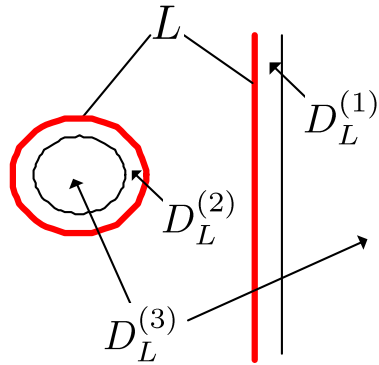}}
  \caption{The partition of destinations in the two-dimensional random network. To simplify the figure, one BS is shown in the left half network.}
  \label{FIG:Dcut}
  \end{center}
\end{figure}

\begin{figure}
  \begin{center}
  \subfigure[\!\!\!\!\!\!\!\!\!\!\!]{%
  \scalebox{0.69}{\includegraphics{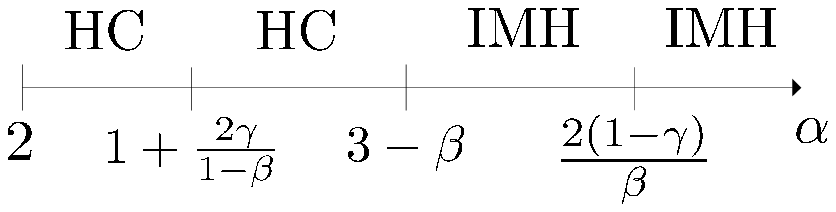}}\label{FIG:regimeC}}
  \vspace{.0cm}
  \subfigure[\!\!\!\!\!\!\!\!\!\!\!]{%
  \scalebox{0.72}{\includegraphics{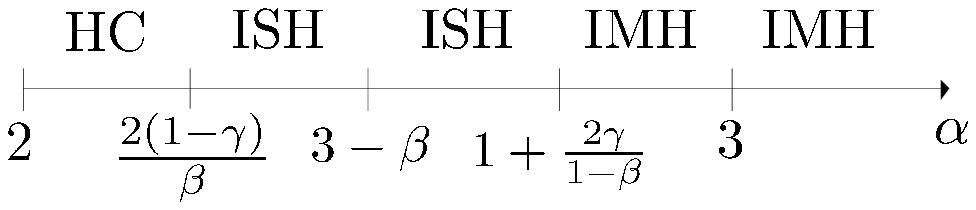}}\label{FIG:regimeD}}
  \caption{The best achievable schemes with respect to $\alpha$. (a) The Regime C. (b) The Regime D.}
  \label{FIG:regimeCD}
  \end{center}
\end{figure}

\begin{figure}
  \begin{center}
  \scalebox{0.61}{\includegraphics{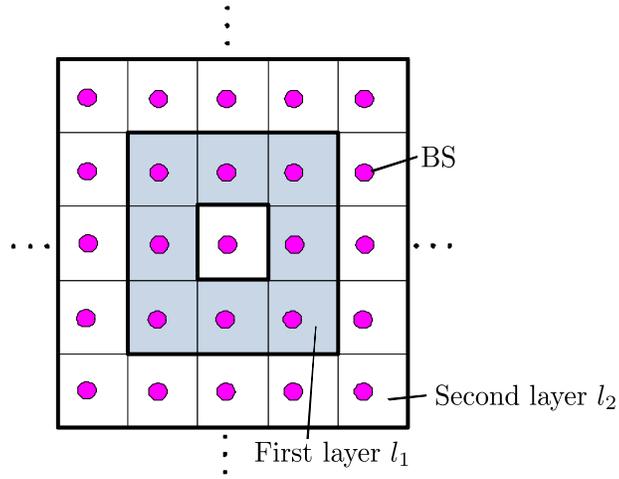}}
  \caption{Grouping of interfering cells. The first layer $l_{1}$ of the network represents the outer 8 shaded cells.}
  \label{FIG:layer}
  \end{center}
\end{figure}

\begin{figure}
  \begin{center}
  \scalebox{0.58}{\includegraphics{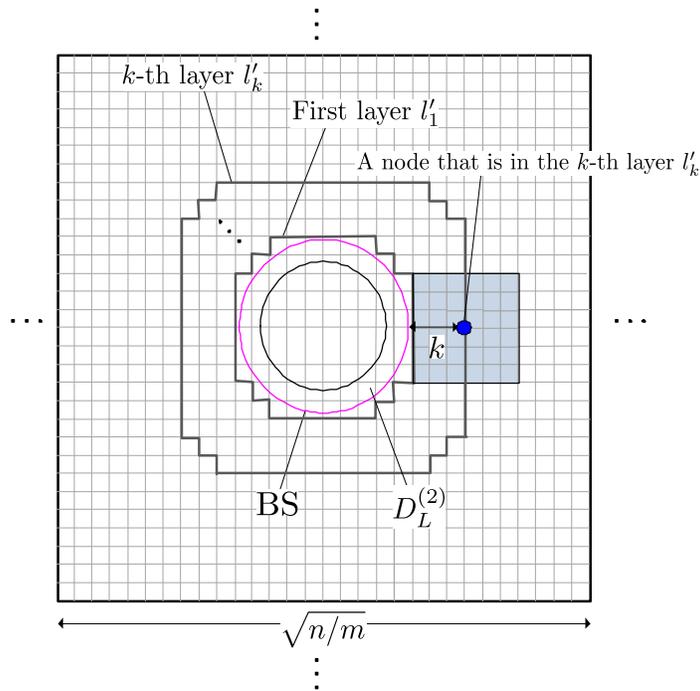}}
  \caption{The displacement of the nodes to square vertices. The antennas are regularly placed at spacing $\sqrt{\frac{n}{ml}}$ outside the shaded square.}
  \label{FIG:regularregime}
  \end{center}
\end{figure}

\begin{table}[t!]
\renewcommand{\arraystretch}{0.7}
\begin{center}
\caption{Achievable rates for an extended network with
infrastructure.}
\begin{tabular}{c|c|c|c} \hline %
Regime           & Condition                    & Scheme                   & $e(\alpha,\beta,\gamma)$     \\\hline \hline%
\vspace{-0.1cm}              &   $2<\alpha<3$ & HC  &   $2-\frac{\alpha}{2}$ \\
\vspace{-0.1cm}A                &   & \\
                 &   $\alpha\ge3$ & MH   &   $\frac{1}{2}$ \\ \hline %
\vspace{-0.1cm}              &   $2<\alpha<4-2\beta-2\gamma$ & HC  &   $2-\frac{\alpha}{2}$ \\
\vspace{-0.1cm}B                &   & \\
                 &   $\alpha\ge4-2\beta-2\gamma$ & IMH   &   $\beta+\gamma$ \\ \hline %
\vspace{-0.1cm}              &   $2<\alpha<3-\beta$ & HC  &   $2-\frac{\alpha}{2}$ \\
\vspace{-0.1cm}C                &   & \\
                 &   $\alpha\ge3-\beta$ & IMH   &   $\frac{1+\beta}{2}$ \\ \hline %
                &   $2<\alpha<\frac{2(1-\gamma)}{\beta}$ & HC    &   $2-\frac{\alpha}{2}$ \\
D                &   $\frac{2(1-\gamma)}{\beta}\le\alpha<1+\frac{2\gamma}{1-\beta}$ & ISH &  $1+\gamma-\frac{\alpha(1-\beta)}{2}$  \\
                  &   $\alpha\ge1+\frac{2\gamma}{1-\beta}$ & IMH  & $\frac{1+\beta}{2}$  \\ \hline %
\end{tabular}
\label{T:table1}
\end{center}
\end{table}

\end{document}